\date{}
%
\documentclass[11pt]{article}
\usepackage[all]{xy}

\textwidth 15.3cm
\oddsidemargin 0in
\evensidemargin 0in
\textheight 22.3cm
\topmargin 0in
\headsep 0in

\usepackage{amsmath,amsthm,amsfonts,amssymb,amsopn,amscd}

\def\o{{\omega}}

\def\be{\begin{equation}}
\def\ee{\end{equation}}

\def\a{\alpha}

\def\e{\varepsilon}

\def\l{\lambda}

\def\setminus{\smallsetminus}
\def\gA{{\mathfrak A}}
\def\gB{{\mathfrak B}}

\def\A{{\cal A}}
\def\B{{\cal B}}
\def\C{{\cal C}}

\def\G{{\cal G}}

\def\N{{\cal N}}

\def\I{{\cal I}}
\def\H{{\cal H}}
\def\K{{\cal K}}
\def\S{{\cal S}}
\def\V{{\cal V}}
\def\f{{\varphi}}
\def\s{{\sigma}}

\def\S2{S^{1(2)}}
%


\newtheorem{theorem}{Theorem}[section]
\newtheorem{lemma}[theorem]{Lemma}

\newtheorem{corollary}[theorem]{Corollary}

\newtheorem{proposition}[theorem]{Proposition}

\theoremstyle{definition} 

\theoremstyle{remark} \newtheorem{remark}[theorem]{Remark}

\def\setminus{\smallsetminus}

\def\RR{{\mathbb R}}

\def\sl2{{{\rm SL}(2,\RR)}}
\def\psl2{{{\rm PSL}(2,\RR)}}

\def\u1{{{\rm V}(1)}}
\def\su2{{{\rm SV}(2)}}

\def\so3{{{\rm SO}(3)}}

\def\A{{\mathcal A}}
\def\B{{\mathcal B}}
\def\C{{\mathcal C}}

\def\F{{\mathcal F}}
\def\H{{\mathcal H}}
\def\I{{\mathcal I}}
\def\K{{\mathcal K}}

\def\N{{\mathcal N}}
\def\O{{\mathcal O}}

\def\V{{\mathcal V}}
\def\V{{\mathcal V}}

\def\G{{\bf G}}

\title{\Huge How to add a boundary condition
\\[0,8cm]
}

\author{
{\sc Sebastiano Carpi}\footnote{Supported in part by the ERC Advanced Grant 227458  
OACFT ``Operator Algebras and Conformal Field Theory" and GNAMPA-INDAM.}
\\
Dipartimento di Economia, Universit\`a di Chieti-Pescara ``G. d'Annunzio''\\
Viale Pindaro, 42, I-65127 Pescara, Italy \\
E-mail: {\tt s.carpi@unich.it}
\\[0,4cm]
{\sc Yasuyuki Kawahigashi}\footnote{Supported in part by Global COE Program ``The research and training center for new development in mathematics'', the Mitsubishi FoundationResearch Grants and the Grants-in-Aid for Scientic Research, JSPS.}\\
Department of Mathematical Sciences\\
University of Tokyo, Komaba, Tokyo, 153-8914, Japan\\[0,05cm]
           and\\[0,05cm]
Kavli Institute for the Physics and Mathematics of the Universe\\
5-1-5 Kashiwanoha, Kashiwa, 277-8583, Japan\\
E-mail: {\tt yasuyuki@ms.u-tokyo.ac.jp}
\\[0,4cm]
{\sc Roberto Longo}$^*$
\\
Dipartimento di Matematica,
Universit\`a di Roma ``Tor Vergata''\\
Via della Ricerca Scientifica, 1, I-00133 Roma, Italy\\
E-mail: {\tt longo@mat.uniroma2.it}}

\begin{document}

\maketitle
\vskip1cm

\begin{abstract}
Given a conformal QFT local net of von Neumann algebras $\B_2$ on the two-dimensional Minkowski spacetime with irreducible subnet $\A\otimes\A$, where $\A$ is a completely rational net on the left/right light-ray, we show how to consistently add a boundary to $\B_2$: we provide a procedure to construct a Boundary CFT net $\B$ of von Neumann algebras on the half-plane $x>0$, associated with $\A$, and locally isomorphic to $\B_2$. All such locally isomorphic Boundary CFT nets arise in this way. There are only finitely many locally isomorphic Boundary CFT nets and we get them all together. In essence, we show how to directly redefine the $C^*$ representation of the restriction of $\B_2$ to the half-plane by means of subfactors and local conformal nets of von Neumann algebras on $S^1$. 
\end{abstract}

\newpage

\section{Introduction}

The Operator Algebraic description of Boundary Conformal Quantum Field Theory has been set up in \cite{LR1} and turned out to be successful in various directions: not only it provides a simple, general, model independent framework but it allows a classification for the discrete series \cite{KL1,KLPR}, a method for constructing new Boundary QFT models \cite{LW,B, BT, Tan} and is suitable to describe various contexts \cite{LR4}. 

A Boundary CFT net $\B$ of von Neumann algebras on the half Minkowski plane $M_+$ associated with the chiral net $\A$ (that we always assume to be completely rational) is defined in Section {\ref{BCFT}. In particular, $\B$ is locally isomorphic to the restriction to $M_+$ of a local CFT net of factors on the two-dimensional Minkowski plane with chiral subnet $\A\otimes\A$ (the two-dimensional net on $M$, that is the product of the two chiral one-dimensional nets on the light-rays that we assume to be coincide with $\A$) and the restriction to $\A\otimes\A$ of the vacuum representation of $\B$ is to be induced by a representation of $\A$, see Subsection \ref{a+}.

One of the main results in \cite{LR1} is that Haag dual BCFT nets are in one-to-one correspondence with non-local, but relatively local, irreducible, one-dimensional extensions of the  local net $\A$ on the real boundary line. In particular, given such an extension $\B_0\supset \A$, the corresponding BCFT net $\B_0^{\rm ind}$ on $M_+$ is given by 
\[
\B_0^{\rm ind}(\O) = \B_0(K)'\cap\B_0(L)\ ;
\]
here $\O$ is the double cone $\O\equiv I\times J$ of $M_+$ (so $I,J$ are intervals of the boundary real line with disjoint closures and $J > I$), $L$ is the smallest interval containing $I$ and $J$ and $K$ is the interval between $I$ and $J$.

Despite its simplicity and effectiveness, the above holographic correspondence has some limitations. Firstly, only Haag dual BCFT nets appear and the non-dual ones are not directly visible, although present in the structure as intermediate subnets. Secondly, the local isomorphism of the BCFT net with a 2D CFT net on the full Minkowski plane can be seen quite indirectly through the computation of combinatorial data. Because of this last point, a more natural construction of the corresponding 2D net on the plane was later provided by shifting directly the boundary at infinity \cite{LR2}.

The converse problem, how to consistently add a boundary condition in any 2D CFT (without affecting the algebraic structure away from the boundary) has however so far remained unsettled. 

The purpose of this paper is indeed to provide a general algebraic procedure to this end. Given any 2D CFT net $\B_2$ on the Minkowski plane with chiral net $\A\otimes\A$, where $\A$ is completely rational, we construct a BCFT local net $\B$ on the half-plane based on $\A$ and locally isomorphic to $\B_2$.

Our procedure will generate all the BCFT local isomorphism class at one's hand. There are finitely many local, irreducible BCFT nets on $M_+$, locally isomorphic to $\B_2$, and all of them appear as irreducible subrepresentations of a reducible net that we are going to construct.

Our construction only depends on the restriction of $\B_2$ to $M_+$ so, starting with a Boundary CFT net on $M_+$, it also give a procedure to produce all the BCFT nets that are locally isomorphic to $\B$.

The reader willing to compare our Operator Algebraic construction to other works in more traditional approaches to Boundary CFT in Physics,  e.g. to the work of Cardy  \cite{Cardy1984,Cardy1989}, is referred to the discussion in the paper \cite{LR1}, where he can find motivations and illustrations. We say a couple of words here in the outlook.

We also mention that there are results, related to the ones in this paper, that were previously obtained by Kong and Runkel  within the context of modular tensor categories, see \cite{KR}.

\section{Preliminaries}
We shall denote by $M$ the Minkowski plane,\footnote{We deal here with two-dimensional spacetimes,  the reader may recognize how certain facts hold in more generality.} that is $\RR^2$ with the Lorentz metric $t^2 - x^2$. A double cone $\O$ in $M$ is a rectangle in the chiral coordinates, namely a subset of the form $\O=\{\langle t,x\rangle: u\equiv t+x\in I, v\equiv t-x\in J\}$ with $I,J$ open, non-empty intervals of $\RR$ and we set $\O\equiv I\times J$. We denote by $\K$ the set of double cones of $M$.

If $E\subset M$ is an open region, $\K(E)$ denote the set of double cones contained in $E$ with positive distance from the boundary of $E$. The half-plane $x>0$ is denoted by $M_+$ and in this case we shall put $\K_+ \equiv \K(M_+)$.

The M\"obius group $\G$ acts locally\footnote{If a Lie group $G$ acts on the manifold $P$ then, in particular, for every compact set $K\subset P$ there is a neighborhood $\V_K$ of the identity of $G$ such that $gK\subset P$ for all $g\in\V_K$. For the notion of local action see \cite{BGL}. }
on $\RR$ and globally on the one point compactification $S^1$ of $\RR$; the universal cover $\bar\G$ of $\G$ acts globally on the universal cover of $S^1$. The product action on the chiral lines of $M$ gives a local action of $\bar\G\times\bar\G$ on $M$, and a global action of (a quotient of) $\bar\G\times\bar\G$ on the Einstein universe (the cylinder). We shall be mainly concerned with the local action of $\G$ on $M$ obtained by restricting the local action of $\G\times\G$ to the diagonal, this action restricts to local actions of $\G$ on $M_+$ and on its boundary, the time-axis of $M$.

If $E$ is an open subset of $M$, the above local action of $\G$ on $M$ clearly restricts to a local action of $\G$ on $E$.

A \emph{net of von Neumann algebras $\B$ on $E$} is an isotonous map
\[
\B:\O\in\K(E)\mapsto \B(\O)
\]
where the $\B(\O)$'s are von Neumann algebras on a fixed Hilbert space $\H = \H_\B$.

$\B$ is \emph{local} if $\B(\O_1)$ and $\B(\O_2)$  commute if $\O_1$ and $\O_2$ are spacelike separated. In this paper all nets will be local.

Clearly, if $\B$ is a net on $E$ and $E_1\subset E$ we may restrict $\B$ to a net on $E_1$.

Suppose $G$ is a Lie subgroup of $\bar\G$, containing the translation subgroup, thus acting locally on $E\subset M$. The net $\B$ is $G$-covariant if there exists a unitary representation $U$ of (a covering of) $G$ on $\H_\B$ such that
\[
U(g)\B(\O)U(g)^* = \B(g\O)
\]
for every $\O\in\K(E)$ 
with \emph{$g,\O$ path connected to the identity within $E$} namely there exists a path of elements $g_s\in G$ connecting the identity of $G$ with $g$ such that $g_s\O\in\K(E)$. 

Moreover we assume the generator of the translation one-parameter subgroup to be positive and the existence of a unique (up to a phase) $U$-invariant unit vector $\Omega\in\H_\B$.

A \emph{representation} of the net $\B$ on a $E$ is a map
\[
\O\in\K(E)\mapsto \pi_\O
\]
where $\pi_\O$ is a normal representation of $\B(\O)$ on a fixed Hilbert space $\H_\pi$ such that
\[
\pi_{\tilde\O} |_{\B(\O)} = \pi_\O,\quad \O\subset\tilde\O\ .
\]

The representation $\pi$ of $\B$ on the Hilbert space $\H_\pi$ is covariant (with positive energy) if there exists a unitary representation $U_\pi$ on $\H_\pi$ of a covering group $\bar G$ of $G$ such that
\[
\pi_{g\O}(U(g)XU(g)^*)=U_\pi(g) \pi_\O(X)U_\pi(g)^*,\quad X\in\B(\O),\ \O\in\K(E), 
\]
with $g\in\bar G$ and $g,\O$ path connected to the identity within $E$

We shall say that two nets $\B_1$, $\B_2$ on $E$, acting on the Hilbert spaces $\H_1$ and $\H_2$, are \emph{locally isomorphic} if for every $\O\in\K(E)$ there is an isomorphism $\Phi_\O:\B_1(\O)\to\B_2(\O)$ such that
\[
\Phi_{\tilde\O} |_{\B_1(\O)} = \Phi_\O 
\]
if $\O$ and $\tilde\O$ belong to $\K(E)$ with $\O\subset \tilde\O$; \emph{$G$-covariantly isomorphic} if furthermore                
\[
 U_2(g)\Phi_\O(X) U_2(g)^* =\Phi_{g\O}(U_1(g)XU_1(g)^*),\quad X\in \B_1(\O)\ ,
\]
with $U_1$ and $U_2$ the corresponding unitary representation of $G$ on $\H_1$ and $\H_2$; and (unitarily) \emph{equivalent} if they are covariantly isomorphic and the isomorphism $\Phi_\O$ is implemented by a unitary $\H_1\to\H_2$ independent of $\O$.

Let $\B$ be a local net on $E$. We denote by $C^*(\B) = C^*(\B,E)$ the (locally normal)  \emph{universal C$^*$-algebra of $\B$}, that can be defined similarly as in chiral CFT context \cite{F,GL1,CCHW}: $C^*(\B)$ is the unique C$^*$-algebra with embeddings $\iota_\O:\B(\O)\to C^*(\B)$, $\O\in\K(E)$, such that $\iota_{\tilde\O}|_{\B\O)} =\iota_\O$ if $\O\subset\tilde\O$, generated by $\{\iota_\O(\B(\O)):\O\in\K(E)\}$, with a one-to one correspondence between locally normal representations of $\pi$ of $C^*(\B)$ and representations $\{\pi_\O\}_{\O\in\K(E)}$  of $\B$ on a Hilbert space $\H$ given by the commutative diagram
\[
\xymatrix{
C^*(\B) \ar[dr]^\pi  &  \\
\B(\O)\ar[r]^{\pi_\O}	\ar@{->}[u]^{\iota_\O}& B(\H) } 
\]
Here $\pi$ locally normal means that $\pi\cdot\iota_\O$ is normal, $\O\in\K(E)$. Moreover $C^*(\B)$ is minimal with this universality property: if $C^*(\B)^{\text{\tiny{$\sim$}}}$ is another $C^*$-algebra with the same properties, there exists a homomorphism $C^*(\B)^{\text{\tiny{$\sim$}}}\to C^*(\B)$ interchanging the embeddings of the $\B(\O)$'s.
Note that locally isomorphic nets have canonically isomorphic universal $C^*$-algebras.

We shall denote by $\gB(E)$ the $C^*$-algebra on the defining (vacuum) Hilbert space $\H_\B$ generated by the von Neumann algebras $\B(\O)$ as $\O$ runs in $\K(E)$. By the universality property, $\gB(E)$ is naturally a quotient of $C^*(\B,E)$, indeed
\be\label{quotient}
\gB(E) = \pi_0\big(C^*(\B,E)\big),
\ee
where $\pi_0$ is the lift to $C^*(\B,E)$ of the identity representation of $\B$.

If $F\subset E$ we have $\gB(F)\subset\gB(E)$. On the other hand, by the universally property, there a homomorphism
\[
C^*(\B,F)\to C^*(\B,E)\ ,
\]
cf. the general covariance principle \cite{BFV}.
\begin{lemma}\label{simple}
If $\K(E)$ is an inductive family and the $\B(\O)$ are type $III$ factors (on a separable Hilbert space), then $C^*(\B,E)$ is equal to $\gB(E)$ and is a simple, purely infinite $C^*$-algebra in Cuntz standard form (cf. \cite{I,CCHW}).
\end{lemma}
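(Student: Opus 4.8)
My plan is to show that the two hypotheses force $C^*(\B,E)$ to be the $C^*$-inductive limit $\varinjlim_{\O\in\K(E)}\B(\O)$, and then to run the argument familiar from the chiral case: localize an ideal (resp.\ a positive element) into a single $\B(\O)$ and invoke the ``fullness'' of type $III$ factors. For the first point I would observe that, since $\K(E)$ is directed and $\iota_{\tilde\O}|_{\B(\O)}=\iota_\O$ for $\O\subset\tilde\O$, every finite family of elements of $\bigcup_{\O}\iota_\O(\B(\O))$ lies in a single $\iota_\O(\B(\O))$; hence $\bigcup_{\O}\iota_\O(\B(\O))$ is already a $*$-subalgebra of $C^*(\B,E)$ with common unit $1:=\iota_\O(1_{\B(\O)})$, independent of $\O$ (the $\B(\O)$ all having the unit of $B(\H_\B)$). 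Since $C^*(\B,E)$ is by definition generated by the $\iota_\O(\B(\O))$, this yields $C^*(\B,E)=\overline{\bigcup_{\O}\iota_\O(\B(\O))}=\varinjlim_{\O}\B(\O)$; each $\iota_\O$ is injective (the identity representation corresponds to $\pi_0$ with $\pi_0\circ\iota_\O=\mathrm{id}_{\B(\O)}$), hence isometric, so each $\iota_\O(\B(\O))\cong\B(\O)$ is a type $III$ factor sitting unitally in $C^*(\B,E)$.

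The second ingredient I would isolate is the elementary fact that \emph{in a type $III$ factor $M$, for every $0\ne b\in M_+$ there is $w\in M$ with $w^*bw=1_M$}: the spectral projection $e=\chi_{[\|b\|/2,\,\|b\|]}(b)$ is nonzero, $ebe$ is invertible in the corner $eMe$, so with $k=(ebe)^{-1/2}\in eMe$ one gets $k^*bk=e$; since $M$ is type $III$, $e\sim 1_M$ via a partial isometry $u$ ($u^*u=e$, $uu^*=1_M$), and then $w:=ku^*$ satisfies $w^*bw=u(k^*bk)u^*=ueu^*=1_M$. (Note this does not use separability.)

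Now simplicity follows in the usual way. Let $J\ne 0$ be a closed ideal of $C^*(\B,E)$ and pick $0\ne a\in J_+$ with $\|a\|=1$; then $b:=(a-\tfrac12)_+\le a$ lies in $J\setminus\{0\}$ ($J$, being a closed two-sided ideal, is hereditary). Choose $\O$ and $c\in\iota_\O(\B(\O))_+$ with $\|b-c\|<\tfrac18$, and set $c':=(c-\tfrac18)_+$, which is nonzero since $\|c\|\ge\|b\|-\tfrac18=\tfrac38$. By the standard perturbation estimate ($\|b-c\|<\delta$ implies $(c-\delta)_+\precsim b$) we have $c'\precsim b$, so the approximants $r_nbr_n^*\to c'$ all lie in $J$ and therefore $c'\in J$. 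Applying the fact of the previous paragraph inside the type $III$ factor $\iota_\O(\B(\O))$ produces $w$ with $w^*c'w=1$, whence $1\in J$ and $J=C^*(\B,E)$. Being a nonzero representation of the now-simple $C^*(\B,E)$, $\pi_0$ is faithful, i.e.\ $C^*(\B,E)=\gB(E)$ (in Cuntz-style notation, $\gB(E)=\pi_0(C^*(\B,E))\cong C^*(\B,E)$).

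Finally, for pure infiniteness I would run the same chain for an arbitrary $0\ne a\in C^*(\B,E)_+$: it gives $1=w^*c'w\precsim c'\precsim b\le a$, hence $1\precsim a$; since $1$ is a projection this means $x^*ax=1$ for some $x$ (invert $r_nar_n^*$ once it is within $\tfrac12$ of $1$), which is exactly the criterion for a unital simple $C^*$-algebra to be purely infinite. In particular $1$ is properly infinite, and, exactly as in the chiral case \cite{I,CCHW}, one concludes that $C^*(\B,E)$ is in Cuntz standard form. The step I expect to require the most care is the identification $C^*(\B,E)=\varinjlim_{\O}\B(\O)$, which is what makes ideals and positive elements ``locally visible''; it genuinely uses the directedness of $\K(E)$ together with the compatibility $\iota_{\tilde\O}|_{\B(\O)}=\iota_\O$, whereas everything afterwards is soft, the only operator-algebraic input being the elementary fullness of type $III$ factors.
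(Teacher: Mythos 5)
Your proof is correct and takes essentially the same route as the paper's, which simply identifies $C^*(\B,E)$ with the $C^*$-inductive limit of the $\B(\O)$'s and then cites the standard facts that simplicity, pure infiniteness and Cuntz standard form pass from the type $III$ factors $\B(\O)$ to the limit; you merely spell out the localization and Cuntz-comparison details that the paper leaves to the references. One small caveat: your parenthetical claim that the step $e\sim 1$ ``does not use separability'' is not quite right --- in a type $III$ factor the equivalence of every nonzero projection to $1$ requires $\sigma$-finiteness, which here is exactly what the separable-Hilbert-space hypothesis of the lemma supplies.
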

\begin{proof}
$\gB(E)$ is the $C^*$ inductive limit of the type $III$ factors $\B(\O)$. As the $\B(\O)$'s are simple $C^*$-algebras, also $\gB(E)$ is simple and it has the above universality property so $C^*(\B)= \gB(E)$. As a type $III$ factor is purely infinite in Cuntz standard form, so is $\gB(E)$.
\end{proof}
The following notion will be crucial in this paper.
\smallskip

\noindent
Let $\gA\subset\gB$ unital C$^*$-algebras. If $\gB$ acts on a Hilbert space $\H$, we shall say that $\xi\in\H$ is a \emph{bicyclic vector} for $\gA\subset\gB$ if $\xi$ is cyclic for $\gA$, hence for $\gB$. If $\f$ is a state on $\gB$, we shall say that $\f$ is a 
\emph{bicyclic state} for $\gA\subset\gB$ if the vector $\xi_\f$ in the GNS representation $\pi_\f$ of $\gB$ is bicyclic for $\pi_\f(\gA)\subset\pi_\f(\gB)$.
\begin{lemma}\label{bicyclic}
Let
\[
\begin{array}{ccc}N & \subset & M \\ \cup &  & \cup \\N_0 & \subset & M_0\end{array}
\]
be a square of inclusions of infinite factors and $\varepsilon:M\to 
M_0$
a normal faithful conditional expectation with $\varepsilon(N)=N_0$. 
Suppose that
$N_0\subset N$ and $M_0\subset M$ are irreducible inclusions with the 
same
finite index: $[M:M_0]=[N:N_0]$.

If $\f_0$ is normal state on $M_0$ which is bicyclic for $N_0\subset M_0$, then the state $\f\equiv \f_0\cdot\e$ on $M$ is bicyclic for $N\subset M$.
\end{lemma}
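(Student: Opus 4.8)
The plan is to realize the closed cyclic subspace of $N$ as the range of a canonical projection sitting inside the Jones extension of $M_0\subset M$, and then to deduce from the equality of indices that this projection is the identity.

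First I would pass to the GNS representation $(\H,\pi,\xi)$ of $\f=\f_0\cdot\e$; since $M$ is a factor this representation is faithful, so I identify $M=\pi(M)$ (note that $\xi$ need not be separating for $M$). Put $\H_0:=\overline{M_0\xi}$ and let $e$ be the projection onto $\H_0$. Because $\e|_{M_0}=\mathrm{id}$ we have $\f|_{M_0}=\f_0$, so $\H_0$ together with $\xi$ is a GNS space for $(M_0,\f_0)$, and bicyclicity of $\f_0$ for $N_0\subset M_0$ gives $\overline{N_0\xi}=\H_0$. Since $\f=\f\cdot\e$, the projection $e$ is the Jones projection for $M_0\subset M$: $e\in M_0'$ and $exe=\e(x)e$ for $x\in M$. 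Let $\langle M,e\rangle:=M\vee\{e\}$ be the basic construction, with its normal faithful dual conditional expectation $\hat\e:\langle M,e\rangle\to M$, normalized so that $\hat\e(e)=[M:M_0]^{-1}$ and $\hat\e(xey)=[M:M_0]^{-1}xy$ for $x,y\in M$. Here irreducibility of $M_0\subset M$, resp.\ of $N_0\subset N$, is used only to guarantee that $\e$, resp.\ $\e|_N$, is the unique conditional expectation, so that $[M:M_0]$, resp.\ $[N:N_0]$, is the index of that expectation. The goal is $\overline{N\xi}=\H$.

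Now the main construction. Choose a finite quasi-basis $\{n_j\}$ for $\e|_N:N\to N_0$, so that $\sum_j n_j\e(n_j^*y)=y$ for $y\in N$, $\sum_j n_j\e(n_j^*)=1$, and $\sum_j n_jn_j^*=[N:N_0]$. Set $T:=\sum_j n_j e\,n_j^*\in\langle M,e\rangle$. Using $e\xi=\xi$ and $exe=\e(x)e$ one computes $Tx\xi=\big(\sum_j n_j\e(n_j^*x)\big)\xi=:\Phi(x)\xi$ for $x\in M$, where $\Phi:M\to M$ is linear with range $\mathrm{span}(NM_0)$ and fixes $NM_0$ pointwise, hence is idempotent. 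It follows that $T$ is a positive idempotent on $\H$ (positivity since $T=\sum_j(n_je)(n_je)^*$; idempotency from $T^2x\xi=T\Phi(x)\xi=\Phi(x)\xi=Tx\xi$ on the dense set $M\xi$), i.e.\ an orthogonal projection, whose range is $\overline{\Phi(M)\xi}=\overline{\mathrm{span}(NM_0)\xi}=\overline{N\H_0}$. Since $\H_0=\overline{N_0\xi}\subseteq\overline{N\xi}$ and $\overline{N\xi}$ is $N$-invariant, this range is exactly $\overline{N\xi}$; thus $T$ is the projection onto the cyclic subspace of $N$.

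Finally I would apply $\hat\e$: being an $M$-bimodule map, $\hat\e(T)=\sum_j n_j\hat\e(e)n_j^*=[M:M_0]^{-1}\sum_j n_jn_j^*=[M:M_0]^{-1}[N:N_0]$, which equals $1=\hat\e(1)$ precisely because $[M:M_0]=[N:N_0]$. Hence $\hat\e(1-T)=0$ with $1-T\ge0$, and faithfulness of $\hat\e$ forces $T=1$, i.e.\ $\overline{N\xi}=\H$. Therefore $\xi=\xi_\f$ is cyclic for $N$ (and, being the GNS vector, for $M$), so $\f$ is bicyclic for $N\subset M$. I expect the main obstacle to be conceptual rather than computational: one should resist trying to show directly that $N\vee M_0=M$ or that $\{n_j\}$ is a quasi-basis for $M/M_0$ — the linear span $NM_0$ is not visibly weakly dense nor $*$-closed — and instead work inside $\langle M,e\rangle$, where the idempotent $T=\sum_j n_jen_j^*$ packages exactly the required information and the index equality can be read off through $\hat\e$. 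The rest is routine once one invokes the standard finite-index technology for infinite factors (existence of a finite quasi-basis, $\sum_j n_jn_j^*$ being the index, the basic construction and its faithful dual expectation with the stated normalization), and one notes that no faithfulness of $\f$ as a state is needed, since $\pi_\f$ is automatically faithful on the factor $M$.
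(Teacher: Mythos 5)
Your proof is correct, but it takes a genuinely different route from the paper's, whose argument is three lines long: citing \cite{GL1} (Prop.~2.3) and \cite{Lhopf,Linterm}, the authors use that for an irreducible finite-index inclusion of \emph{infinite} factors there is a single isometry $T\in N$ with $N=N_0T$, and that the index equality $[M:M_0]=[N:N_0]$ together with $\e(N)=N_0$ forces $M=M_0T$ as well; then $\overline{M\xi}=\overline{T^*M_0\xi}=\overline{T^*N_0\xi}\subseteq\overline{N\xi}$ and bicyclicity is immediate. Your argument replaces the single isometry by a finite quasi-basis $\{n_j\}$ for $\e|_N$ and reads the index equality off through the dual expectation applied to the projection $\sum_j n_jen_j^*$; in substance both proofs establish the same fact, namely that a generating system for $N$ over $N_0$ of total ``size'' $[N:N_0]=[M:M_0]$ must already generate $M$ over $M_0$. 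What your version buys is independence from the infiniteness of the factors (the quasi-basis machinery also works in type ${\rm II}_1$), at the cost of invoking the basic construction; what the paper's version buys is brevity, exploiting that in the infinite-factor setting the quasi-basis can be taken to be a single isometry. One point you should make explicit: you build $\langle M,e\rangle$ and its faithful normal dual expectation $\hat\e$ inside the GNS representation of $\f$, which need not be standard since neither $\f_0$ nor $\f$ is assumed faithful; the existence and faithfulness of $\hat\e$ there rest on the injectivity of $m_0\mapsto m_0e$ on $M_0$, which does hold --- but only because $M_0$ is a factor, so that its normal representation on $\H_0$ is automatically faithful. Alternatively, and more cleanly, run the $\hat\e$ computation once and for all in the standard form of $M$ to conclude that $\{n_j\}$ is a quasi-basis for $\e:M\to M_0$, i.e.\ $x=\sum_j n_j\e(n_j^*x)$ for every $x\in M$, hence $M=\mathrm{span}(NM_0)$ as a finite linear span; then $\overline{M\xi}=\overline{NM_0\xi}=\overline{N\xi}$ in any representation and the projection $T$ is not needed at all.
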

\begin{proof}
We can assume we are in the Hilbert space $\H$ of the GNS representation of $\f$ with cyclic vector $\xi$. By assumption $\H_0\equiv\overline{N_0\xi}=\overline{M_0\xi}$.

Now, following the argument in \cite[Prop. 2.3]{GL1}, there is an isometry $T\in N$ such that $N =N_0 T$ and $M=M_0 T$, cf. \cite{Lhopf,Linterm}. Therefore $\H= \overline{M\xi} = \overline{T^*M_0\xi}=  \overline{T^*\H_0}=  \overline{T^*N_0\xi}$, namely $\f$ is bicyclic for $N\subset M$.
\end{proof}
\\[-1cm]
\section{Nets on $M_+$ and on a wedge}
\label{Boundary QFT}

We now discuss more explicitly the nets of von Neumann algebras on the half-plane $M_+$, on a wedge region $W$, and their relations.
\subsection{Nets on $M_+$}
Let $I_1 , I_2$ be intervals of time-axis and $\O= I_1 \times I_2$ be the double cone of $M$ associated with $I_1 , I_2$. Then $\O\subset M_+$ if{f} $I_2 > I_1$ and in this case $\O\in\K_+$  if{f} the closures of $I_1$ and $I_2$ have empty intersection. 

Let $G$ be a Lie subgroup of $\G$ with the inherited local action on $M$, hence on $M_+$, by the diagonal action as above. We will always assume that $G$ contains the translation one-parameter subgroup.
\medskip

\noindent
\emph{A (local, $G$-covariant) QFT net $\B$ of von Neumann algebras on $M_+$} on a Hilbert space $\H = \H_\B$ is a triple $(\B, U, \Omega)$ where
\begin{itemize}
\item $\B$ is a isotonous map 
\[
\O\in\K_+\mapsto \B(\O)\subset B(\H)
\] 
where $\B(\O)$ is a von Neumann algebra on $\H$;
\item
$U$ is a unitary representation of $G$ on $\H$, with positive generator for translation one-parameter subgroup, such that
\[
U(g)\B(\O)U(g)^* = \B(g\O),\quad \O\in\K_+ ,
\]
for all $g\in G$ such that $g,\O$ is path connected to the identity within $M_+$  (in particular for all $g$ in the translation subgroup).
\item
$\Omega\in\H$ is a unit vector such that $\mathbb C\Omega$ are the $U$-invariant vectors and $\Omega$ is cyclic for $\B(\O)$ for each fixed $\O\in\K_+$.
\item $\B(\O_1)$ and $\B(\O_2)$ commute if $\O_1,\O_2\in\K_+$ are spacelike separated.
\end{itemize}
We shall shortly indicate the triple $(\B, U,\Omega)$ by $\B$.
\begin{proposition} 
\label{sep}
Let $\B$ be a local, $G$-covariant net as above. Then $\Omega$ is separating for every $\B(\O)$, $\O\in\K_+$, and $\B$ is irreducible, namely $\gB(M_+)$ is irreducible.
\end{proposition}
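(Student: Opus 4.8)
The plan is to establish the two assertions separately: the separating property is an elementary consequence of locality and cyclicity, while the irreducibility statement requires a Reeh--Schlieder type spectral argument based on the positivity of the energy.

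\emph{Separating property.} Fix $\O\in\K_+$. First I would observe that there is $\tilde\O\in\K_+$ which is spacelike separated from $\O$: writing $\O=I_1\times I_2$ in the chiral coordinates, one takes $\tilde\O=\tilde I_1\times\tilde I_2$ with $\tilde I_1$ a suitable translate of $I_1$ to one side and $\tilde I_2$ a translate of $I_2$ to the other side, so that the two rectangles become spacelike; this can be arranged keeping $\tilde\O\in\K_+$. By locality $\B(\O)\subset\B(\tilde\O)'$, while $\Omega$ is cyclic for $\B(\tilde\O)$ by the very definition of the net; hence $\Omega$ is separating for $\B(\tilde\O)'$, and a fortiori for $\B(\O)$.

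\emph{Irreducibility.} Let $x\in\gB(M_+)'$. Let $H\ge 0$ be the generator of the time-translation one-parameter group $\{T_a\}$. Since $T_a$ preserves the $x$-coordinate it maps $\K_+$ onto $\K_+$, with the connecting path $T_{sa}\O$ staying in $\K_+$; therefore $U(T_a)\gB(M_+)U(T_a)^{-1}=\gB(M_+)$, so $U(T_a)$ normalizes $\gB(M_+)'$ as well and $x_a\equiv U(T_a)xU(T_a)^{-1}\in\gB(M_+)'$. Now fix $\O\in\K_+$, $y\in\B(\O)$, and consider $F(a)\equiv\langle y^*\Omega,\,U(T_a)x\Omega\rangle$. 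On one hand, since $H\ge 0$, writing $F(a)=\int_{[0,\infty)}e^{i\lambda a}\,d\langle y^*\Omega, E(\lambda)x\Omega\rangle$ shows that $F$ is the continuous boundary value of a bounded function holomorphic on the upper half-plane $\{\Im a>0\}$. On the other hand, using $U(T_a)x\Omega=x_a\Omega$, the commutation $[x_a^*,y^*]=0$ (valid because $y^*\in\B(\O)\subset\gB(M_+)$ and $x_a^*\in\gB(M_+)'$) and $U(T_a)\Omega=\Omega$, one rewrites $F(a)=\langle x_a^* y^*\Omega,\Omega\rangle=\langle x_a^*\Omega,\,y\Omega\rangle=\overline{\langle y\Omega,\,U(T_a)x^*\Omega\rangle}$, which by the same spectral argument is the boundary value of a bounded function holomorphic on the \emph{lower} half-plane. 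By the edge-of-the-wedge theorem (here just Morera's theorem together with the matching of the continuous boundary values on $\RR$), $F$ extends to a bounded entire function and is therefore constant. Hence $\langle y^*\Omega,U(T_a)x\Omega\rangle=\langle y^*\Omega,x\Omega\rangle$ for all $a$ and all $y\in\B(\O)$; since $\Omega$ is cyclic for $\B(\O)$ the vectors $y^*\Omega$ are total, so $U(T_a)x\Omega=x\Omega$ for all $a$, i.e. $Hx\Omega=0$.

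To conclude, a translation invariant vector is $U$-invariant (this is immediate if $G$ is the translation group, and in general it follows from positivity of the energy), so by uniqueness of the invariant vector $x\Omega=c\Omega$ with $c=\langle\Omega,x\Omega\rangle$. Since $\Omega$ is cyclic for $\gB(M_+)$ (being cyclic for $\B(\O)\subset\gB(M_+)$), it is separating for $\gB(M_+)'$; as $x-c1\in\gB(M_+)'$ annihilates $\Omega$ we get $x=c1$, hence $\gB(M_+)'=\CC 1$ and $\B$ is irreducible. The routine parts are the geometric construction of the spacelike rectangle and the two spectral representations; the heart of the matter, and the step most likely to need care, is the double analytic continuation of $F$ across the real axis (and, if $G$ is strictly larger than the translation group, the remark that translation invariant vectors are $U$-invariant).
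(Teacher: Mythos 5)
Your proof is correct and follows essentially the same route as the paper: the separating part is identical (a spacelike companion double cone in $\K_+$ plus locality and cyclicity), and your analytic-continuation argument on the commutant is precisely the content of the Borchers--Arveson type theorem that the paper cites to conclude $U(t)\in\gB(M_+)''$ and hence that the vacuum projection lies in $\gB(M_+)''$. The one step you rightly flag---that a translation-invariant vector is $U$-invariant in a positive-energy representation, needed to invoke uniqueness of $\Omega$---is a standard fact that the paper's argument also uses implicitly.
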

\begin{proof}
If $\O\in\K_+$ there is $\O_1\in\K_+$ space like to $\O$. By locality $\B(\O_1)\subset\B(\O)'$, so $\Omega$ is cyclic for $\B(\O)'$, thus separating for $\B(\O)$.

Concerning the irreducibility, since the one-parameter translation group of unitaries $U(t)$ has positive generator, $U(t)\Omega=\Omega$ and  $U(t)\gB(M_+)U(t)^*=\gB(M_+)$, we have $U(t)\in\gB(M_+)''$ by a theorem by Borchers as $\Omega$ is cyclic, see \cite{LN}. So the one-dimensional projection onto  $\mathbb C\Omega$  belongs to $\gB(M_+)''$ and this implies $\gB(M_+)''=B(\H)$, see \cite{LN}.
\end{proof}
A representation of the QFT net $\B$ of von Neumann algebras on $M_+$ on a Hilbert space $\H$ is a map
\begin{equation}\label{pi}
\pi: \O\in\K_+\mapsto \pi_\O
\end{equation}
where $\pi_\O$ is a normal representation of $\B(\O)$ on $\H$ and $\pi_{\tilde O}|_\O =\pi_\O$ if $\O\subset\tilde\O$.

$\pi$ is covariant (with positive energy) if there exists a unitary representation $U_\pi$ of $G$ on $\H$ (with translation positive generator) s.t.
\begin{equation}\label{tcov}
U_\pi(g)\pi_\O(X)U_\pi(g)^* = \pi_{g\O}\big(U(g) X U(g)^*\big)
\end{equation}
if $X\in \B(\O)$ with $g\in G$, $\O\in \K_+$ and $g,\O$ is path connected to the identity within $M_+$.
\subsection{Nets on the wedge $W$}
We shall always denote by $W$ the right wedge $\{\langle t,x\rangle : x>|t|\}$. Thus $W\subset M_+$.

If $\B$ is a $G$-covariant QFT net of von Neumann algebras on $M_+$, clearly $\B$ restricts to a $G$-covariant QFT net of von Neumann algebras on $W$, which is $G$-covariant w.r.t. the local action of $G$ on $W$. Conversely:
\begin{lemma}\label{wd}
Let $\B_W$ a local, $G$-covariant QFT net of von Neumann algebras on $W$ on a Hilbert space $\H$. There exists a unique local, $G$-covariant QFT net $\B$ of von Neumann algebras on $M_+$ on $\H$ whose restriction to $W$ is $\B_W$.
\end{lemma}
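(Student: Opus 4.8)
The point is to extend a net defined only on double cones inside the wedge $W$ to all double cones in $M_+$, using $G$-covariance to "push" the wedge algebras around. First I would observe that every $\O\in\K_+$ can be moved into $W$ by a suitable element of $G$ path-connected to the identity within $M_+$. Indeed, $G$ contains the translations (and, since $G\subset\G$ acts on $M_+$ through the diagonal M\"obius action, also the dilations fixing the origin); a boost/dilation combination shrinks and recenters any double cone of $M_+$ so that its image lies in $W$, and such a motion can be performed along a path staying in $M_+$ because $M_+$ is itself invariant under the diagonal action and the relevant orbit through $\O$ stays away from the boundary time-axis. So for $\O\in\K_+$ pick $g\in G$ with $g\O\in\K(W)$ and the pair $(g,\O)$ path-connected to the identity within $M_+$, and define
\[
\B(\O)\ \equiv\ U(g)^*\,\B_W(g\O)\,U(g)\ .
\]

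The main work is to check this is well defined, i.e. independent of the choice of $g$. Suppose $g_1\O,g_2\O\in\K(W)$ with both motions admissible in $M_+$. Then $h\equiv g_2 g_1^{-1}$ carries $g_1\O$ to $g_2\O$, both double cones of $W$, and $(h, g_1\O)$ is path-connected to the identity within $W$ (concatenate the two paths in $M_+$; one checks the composite stays in $\K(W)$ by a covering/connectedness argument, using that $G$ is connected and the stabilizer considerations for a wedge — this is the place where one must be slightly careful, possibly passing to the universal cover $\bar G$ as in the covariance axiom). Then $G$-covariance of $\B_W$ gives $U(h)\B_W(g_1\O)U(h)^*=\B_W(g_2\O)$, and substituting yields
$U(g_1)^*\B_W(g_1\O)U(g_1)=U(g_2)^*\B_W(g_2\O)U(g_2)$, so $\B(\O)$ is unambiguous. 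I expect this well-definedness — specifically the path-connectedness bookkeeping inside $W$ versus inside $M_+$ — to be the principal obstacle; everything else is formal.

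Granting that, the remaining properties follow routinely. \emph{Isotony}: if $\O\subset\tilde\O$ in $\K_+$, choose a single $g$ moving $\tilde\O$ (hence $\O$) into $W$; then $\B(\O)=U(g)^*\B_W(g\O)U(g)\subset U(g)^*\B_W(g\tilde\O)U(g)=\B(\tilde\O)$ by isotony of $\B_W$. \emph{Covariance}: for $k\in G$ with $(k,\O)$ admissible in $M_+$, write $\B(k\O)$ using a $g'$ with $g'k\O\in\K(W)$; then $g\equiv g'k$ moves $\O$ into $W$ and $U(k)\B(\O)U(k)^* = U(k)U(g)^*\B_W(g\O)U(g)U(k)^* = U(g')^*\B_W(g'k\O)U(g') = \B(k\O)$, so $U$ implements the $G$-action on $\B$, with the same positive generator and the same invariant vector $\Omega$; in particular $\Omega$ is cyclic for each $\B(\O)$ since it is for $\B_W(g\O)$ and $U(g)$ is unitary. \emph{Locality}: if $\O_1,\O_2\in\K_+$ are spacelike, pick a \emph{common} $g\in G$ (a small dilation toward the interior of $W$) with $g\O_1,g\O_2\in\K(W)$ still spacelike — possible since the diagonal action preserves the causal structure of $M$ — and conclude $[\B(\O_1),\B(\O_2)]=U(g)^*[\B_W(g\O_1),\B_W(g\O_2)]U(g)=0$. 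Finally \emph{uniqueness}: any net $\B$ on $M_+$ restricting to $\B_W$ must satisfy $\B(\O)=U(g)^*\B(g\O)U(g)=U(g)^*\B_W(g\O)U(g)$ by its own covariance, so it agrees with the net just constructed. This completes the proof.
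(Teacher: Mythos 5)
Your overall strategy coincides with the paper's: extend by covariance via $\B(g\O)=U(g)\B_W(\O)U(g)^*$, reduce everything to well-definedness of this assignment, and then get isotony, covariance, locality and uniqueness formally. Two minor slips elsewhere: $G$ is only assumed to contain the translation one-parameter subgroup, so you should not invoke dilations either to move $\O$ into $W$ or in the locality argument; a single (time-)translation already carries any $\O\in\K_+$ into $\K(W)$ along a path staying in $\K_+$, and likewise carries a spacelike pair in $\K_+$ to a spacelike pair in $\K(W)$, which is all that is needed and is what the paper does.

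The well-definedness step, however, has a genuine gap. If $g_1\O,\,g_2\O\in\K(W)$, the path you produce for $h=g_2g_1^{-1}$ by concatenating the two admissible paths moves $g_1\O$ through $\K_+$, not through $\K(W)$: the intermediate double cone $\O$ itself may lie far from $W$. So the composite path does \emph{not} stay in $\K(W)$, and the covariance axiom of $\B_W$ --- which only applies to pairs $(h,\O)$ path-connected to the identity \emph{within} $W$ --- cannot be invoked as you do. The missing ingredient, which is exactly what the paper supplies, is a rigidity statement: an element of $\G$ is determined by its action on the four endpoints of the intervals defining the double cones, so the relation $h(g_1\O)=g_2\O$ determines $h$ in $\bar\G$ up to a M\"obius $2\pi$-rotation, and that remaining ambiguity is excluded by the path-connectedness requirement. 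Hence the element $h$ you have is the \emph{same} element of $\bar\G$ as the one realized by a path of double cones inside $W$, for which the covariance of $\B_W$ does apply, and only then does $U(h)\B_W(g_1\O)U(h)^*=\B_W(g_2\O)$ follow. Your phrase ``stabilizer considerations for a wedge'' gestures in this direction, but the relevant stabilizer is that of a double cone (four boundary points) in the covering group, and the argument needs to be carried out; as written, the step would fail.
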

\begin{proof}
With $U$ the unitary representation of (a covering of) $G$ associated with $\B_W$, we set
\be\label{ext}
\B(g\O)\equiv U(g)\B_W(\O)U(g)^*, \quad \O\in\K(W)\ ,
\ee
with $g,\O$ path connected to the identity within $M_+$. Note that all double cones in $\K_+$ have this form, indeed every $\O\in \K_+$ is a translated of a double cone in $\K(W)$.

It is enough to check that \eqref{ext} well defines $\B$, as then  the net properties of $\B$ are then immediate. So let $\O_1\in\K_+$ and $g_1\in G$, with $g_1,\O_1$ be path connected to the identity within $M_+$ and satisfy $U(g_1)\B_W(\O_1)U(g_1)^*= U(g)\B_W(\O)U(g)^*$. Then
\[
U(h)\B_W(\O)U(h)^*  = \B_W(\O_1), \quad h\equiv g^{-1}_1 g\ .
\]
We show now that $h$ is unique. Indeed, as an element of $\bar\G$, the equation $h\O=\O_1$ determines $h$ uniquely up to a M\"obius $2\pi$-rotation (on the boundary line an element of $\G$ is determined by its action on four different points). But then $h$ is unique by the path connectedness property. So $g\O=g_1\O_1$. 

The locality of $\B$ follows because, if $\O_1 , \O_2$ a pair of spacelike separated double cones in $\K_+$, there is a pair  ${\O_1}_W ,{ \O_2}_W$ of spacelike separated double cones in $\K(W)$ and a time-translation mapping ${\O_1}_W ,{ \O_2}_W$ onto $\O_1 , \O_2$.
\end{proof}
\begin{proposition}\label{extpi}
Let $\B$ a local, $G$-covariant QFT net of von Neumann algebras on $M_+$ and $\B_W$ its restriction to $W$.

Every (covariant) representation of $\B$ restricts to a (covariant) representation of $\B_W$.

Conversely, every covariant representation of $\B_W$ extends uniquely to a covariant representation of $\B$.
\end{proposition}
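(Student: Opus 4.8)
The plan is to mirror, at the level of representations, the construction carried out for the nets themselves in the proof of Lemma~\ref{wd}. The first assertion is immediate: since every double cone with positive distance from $\partial W$ is bounded away from the origin, hence from $\partial M_+$, one has $\K(W)\subset\K_+$, so a representation $\pi$ of $\B$ restricts to $\B_W$ just by keeping the $\pi_\O$ for $\O\in\K(W)$, the compatibility relation restricting trivially; and if $\pi$ is covariant with unitary representation $U_\pi$ of the relevant covering $\bar G$ of $G$, the covariance relation~\eqref{tcov} for pairs $g,\O$ path connected to the identity within $W$ is a particular case of the one within $M_+$, so the restriction is again covariant with the same $U_\pi$.

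For the converse, let $(\pi,U_\pi)$ be a covariant representation of $\B_W$ on $\H_\pi$. Recalling from~\eqref{ext} that every $\O\in\K_+$ is of the form $g\O_W$ with $\O_W\in\K(W)$ and $g,\O_W$ path connected to the identity within $M_+$, and that $\B(g\O_W)=U(g)\B_W(\O_W)U(g)^*$, I would set
\[
\tilde\pi_{g\O_W}\big(U(g)XU(g)^*\big)\ :=\ U_\pi(g)\,\pi_{\O_W}(X)\,U_\pi(g)^*,\qquad X\in\B_W(\O_W),
\]
so that each $\tilde\pi_\O$ is a normal representation of $\B(\O)$, being $\pi_{\O_W}$ composed with conjugations by unitaries. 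The main point to check is that this does not depend on the chosen presentation $\O=g\O_W$. If $g\O_W=g_1\O_{W,1}$ with $\O_W,\O_{W,1}\in\K(W)$, put $h\equiv g_1^{-1}g$, so $h\O_W=\O_{W,1}$; by the argument in the proof of Lemma~\ref{wd} this $h$ is uniquely determined (rigidity of double cones under $\bar\G$, the $2\pi$-rotation ambiguity being fixed by path connectedness) and satisfies $U(h)\B_W(\O_W)U(h)^*=\B_W(\O_{W,1})$, and the very same reasoning applied to the covariant representation $\pi$ in place of the covariant net $\B_W$ yields $\pi_{\O_{W,1}}\big(U(h)XU(h)^*\big)=U_\pi(h)\pi_{\O_W}(X)U_\pi(h)^*$; conjugating this identity by $U_\pi(g_1)$ and using $g_1h=g$ gives exactly the agreement of the two candidate values.

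It then remains to check that $\tilde\pi$ is a representation of $\B$ and that it is covariant. For $\O\subset\tilde\O$ in $\K_+$, writing $\tilde\O=g\tilde\O_W$ with $\tilde\O_W\in\K(W)$ one has $g^{-1}\O\subset\tilde\O_W$, whence $g^{-1}\O\in\K(W)$ with $g,g^{-1}\O$ still path connected to the identity within $M_+$ along the same path, and after conjugating by $U_\pi(g)$ the required identity $\tilde\pi_{\tilde\O}|_{\B(\O)}=\tilde\pi_\O$ reduces to the compatibility $\pi_{\tilde\O_W}|_{\B_W(g^{-1}\O)}=\pi_{g^{-1}\O}$ already available for $\pi$. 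Covariance of $\tilde\pi$ with respect to $U_\pi$ is built into the definition: for $g'\in G$ and $Y\in\B(g\O_W)$ one computes directly that $\tilde\pi_{g'(g\O_W)}\big(U(g')YU(g')^*\big)=U_\pi(g')\tilde\pi_{g\O_W}(Y)U_\pi(g')^*$, using that the two paths concatenate so that $g'g,\O_W$ is again path connected to the identity within $M_+$. Finally, uniqueness: any covariant extension of $(\pi,U_\pi)$ agrees with $\pi$ on $\K(W)$ and, by its own covariance relation, must take on $\O=g\O_W$ precisely the value displayed above, so it coincides with $\tilde\pi$.

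As in Lemma~\ref{wd}, the one step I expect to require real care is the well-definedness of the displayed formula, i.e.\ making sure that the ambiguity in writing a double cone of $M_+$ as $g\O_W$ is exactly an element $h$ that is admissible for the covariance relation of $\pi$ over $W$; this is where the rigidity of double cones under $\bar\G$ and the path-connectedness bookkeeping are used. Everything else is a routine transport of the data on $W$ by the unitaries $U(g)$ and $U_\pi(g)$.
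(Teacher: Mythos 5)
Your proof is correct and follows essentially the same route as the paper's: the restriction statement is immediate, and the extension is defined by transporting $\pi_{\O_W}$ with the covariance unitaries via $\tilde\pi_{g\O_W}\big(U(g)XU(g)^*\big):=U_\pi(g)\pi_{\O_W}(X)U_\pi(g)^*$, with well-definedness reduced to the rigidity/path-connectedness argument of Lemma~\ref{wd}. The verifications of isotony, covariance and uniqueness that you spell out are precisely the ones the paper dismisses as ``immediate to check.''
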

\begin{proof}
The first statement is obvious.

As for the second statement, let $U$ be the covariant unitary representation of $G$.
Let $\pi$ be a covariant representation of $\B_W$. We set
\be
\tilde\pi_{g\O}\big(U(g)XU(g)^*\big) \equiv U_\pi(g)\pi_\O(X)U_\pi(g)^*,\quad X\in\B_W(\O),\ \O\in\K_W\ ,
\ee
where $g,\O$ is path connected to the identity within $M_+$ and
$U_\pi$ is the covariant unitary representation of $G$ in the representation $\pi$.

By the argument in Lemma \ref{wd}, $\tilde\pi_{g\O}$ is well defined. It is immediate to check that $\tilde\pi: \O\in\K_+\mapsto \tilde\pi_{\O}$ is a covariant representation of $\B$ extending $\pi$.
\end{proof}
Let $\B$ a $G$-covariant net on $W$ as above and $\gB(W)$ the the associated $C^*$-algebra (isomorphic to $C^*(\B_W)$). If $g\in G$, $\O\in\K(W)$ with $g,\O$ path connected to the identity within $W$, we denote by $\a_g^\O$ the isomorphism of $\B(\O)$ onto $\B(g\O)$ given by
\be\label{invariant}
\a_g^\O (X) = U(g)XU(g)^*,\quad X\in\B(\O)\ .
\ee
A state $\f$ of $\gB(W)$ is \emph{$\a$-invariant} if 
\[
\f(\a_g^\O(X)) =\f(X),\quad X\in\B(\O),
\]
for all $g\in G$ and $\O\in\K(W)$ with $g,\O$ path connected to the identity within $W$.
\begin{lemma}\label{Wcov}
Let $\f$ be an $\a$-invariant state of $\gB$ as above and suppose $\f$ to be bicyclic for $\B(\O)\subset\B(\tilde\O)$ if $\O\subset\tilde\O\in\K(W)$. Then the GNS representation $\pi_\f$ of $\gB$ is covariant, indeed there exists a unique unitary representation $U_\f$ of (a cover of) $G$ implementing the covariance of $\pi_\f$ as in eq. \eqref{tcov} and $U_\f(g)\xi_\f =\xi_\f$.
\end{lemma}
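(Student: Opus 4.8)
The plan is to build $U_\f$ by first defining it near the identity of $G$ on the GNS Hilbert space $\H_\f$ of $\f$ and then invoking the classical fact that a strongly continuous local unitary representation of a connected Lie group extends uniquely to a strongly continuous unitary representation of a covering group $\tilde G$. Let $\xi_\f$ be the GNS vector. Fix a relatively compact double cone $\O_0\in\K(W)$ and a connected neighbourhood $\mathcal N$ of the identity so small that $g\O\in\K(W)$ with $g,\O$ path connected to the identity within $W$ for every $\O\in\K(W)$ with $\O\subset\O_0$ and every $g\in\mathcal N$. For such $g,\O$ and $X\in\B(\O)$ put
\[
V_g\,\pi_\f(X)\xi_\f:=\pi_\f\!\big(\a_g^\O(X)\big)\xi_\f\,,
\]
noting that $\a_g^\O(X)=U(g)XU(g)^*\in\B(g\O)\subset\gB(W)$. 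Bicyclicity of $\f$ makes $\pi_\f(\B(\O))\xi_\f$ dense in $\H_\f$, and $\a$-invariance gives $\|\pi_\f(\a_g^\O(X))\xi_\f\|^2=\f(\a_g^\O(X^*X))=\f(X^*X)=\|\pi_\f(X)\xi_\f\|^2$, so $V_g$ extends to an isometry of $\H_\f$ with dense range $\pi_\f(\B(g\O))\xi_\f$, hence to a unitary; moreover $V_g\xi_\f=\pi_\f(\a_g^\O(1))\xi_\f=\xi_\f$. Independence of the choice of $\O\subset\O_0$ holds because any $\O_1,\O_2\subset\O_0$ lie in a common $\tilde\O\in\K(W)$ with $\tilde\O\subset\O_0$, and $\a_g^{\tilde\O}$ restricts to $\a_g^{\O_i}$ on $\B(\O_i)$.

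After shrinking $\mathcal N$, the map $g\mapsto V_g$ is a local homomorphism: choosing $\O$ small enough that $g_2\O\subset\O_0$ for $g_2\in\mathcal N$, one checks $V_{g_1}V_{g_2}=V_{g_1g_2}$ on the dense set $\pi_\f(\B(\O))\xi_\f$ using $\a^\O_{g_1g_2}=\a^{g_2\O}_{g_1}\circ\a^\O_{g_2}$ and the concatenation of the defining paths inside $W$. Strong continuity of $g\mapsto V_g$ follows in the standard way from the strong continuity of $U$ --- conjugation by $U(g)$ is $\sigma$-weakly continuous on a fixed double cone algebra $\B(\hat\O)$ with $\hat\O\supset\O_0$, on which $\pi_\f$ is normal by local normality of $\f$ --- together with the density of the $\pi_\f(\B(\O))\xi_\f$ and the constancy of $\|V_g\cdot\|$. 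Hence $g\mapsto V_g$ extends uniquely to a strongly continuous unitary representation $U_\f$ of $\tilde G$, with $U_\f(g)\xi_\f=\xi_\f$.

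It remains to verify the covariance relation \eqref{tcov} in full. For $g\in\mathcal N$ and $\O\subset\O_0$ one has $U_\f(g)\pi_\f(X)U_\f(g)^*=\pi_\f(\a_g^\O(X))$: indeed $V_g\pi_\f(XY)\xi_\f=\pi_\f(\a_g^\O(X))\,\pi_\f(\a_g^\O(Y))\xi_\f=\pi_\f(\a_g^\O(X))\,V_g\pi_\f(Y)\xi_\f$ for $X,Y\in\B(\O)$, and $\pi_\f(\B(\O))\xi_\f$ is dense. A standard connectedness argument along admissible paths --- using the $G$-covariance of $\B_W$ to bring each intermediate double cone inside $\O_0$ --- propagates this to every $\O\in\K(W)$ and every $g$ with $g,\O$ path connected to the identity within $W$. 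For uniqueness: if $U'_\f$ is another representation implementing \eqref{tcov} with $U'_\f(g)\xi_\f=\xi_\f$, then for $g$ near the identity $U_\f(g)^*U'_\f(g)$ commutes with every $\pi_\f(\B(\O))$ and fixes $\xi_\f$, hence equals $1$ by cyclicity of $\xi_\f$; since $\tilde G$ is connected, $U_\f=U'_\f$.

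The step I expect to be the main obstacle is the bookkeeping forced by the \emph{local} (rather than global) action of $G$ on $W$: one must arrange $\O_0$, the larger double cone $\hat\O$, and the successively shrunk neighbourhoods of the identity so that the unitaries $V_g$ are consistently defined across all double cones contained in $\O_0$, so that products $g_1g_2$ are realised by paths remaining inside $W$ (which is what makes $g\mapsto V_g$ a genuine local homomorphism), and so that the connectedness argument of the last paragraph reaches every double cone. Once this is organised, the extension to $\tilde G$, the continuity check, and the uniqueness are routine.
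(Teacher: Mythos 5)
Your proof is correct and follows essentially the same route as the paper: define $U_\f(g)$ on the dense subspace $\pi_\f(\B(\O))\xi_\f$ via $\pi_\f(X)\xi_\f\mapsto\pi_\f(\a_g^\O(X))\xi_\f$, using $\a$-invariance for isometry and bicyclicity (together with the exhaustion of $W$ by an increasing sequence in $\K(W)$, which is the paper's explicit first step) for density, then extend the resulting local homomorphism to a covering of $G$. You spell out the continuity, consistency and uniqueness checks that the paper leaves implicit, but the construction is the same.
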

\begin{proof}
Note that $\xi_\f$ is cyclic for $\pi_\f(\B(\O)$ for every $\O\in\K(W)$. Indeed we can find an increasing sequence of double cones $\O_n\in\K(W)$ containing $\O$ such that $\cup_n \O_n = W$. Then
\[
\overline{\pi_\f(\B(\O))\xi_\f} = \overline{\pi_\f(\B(\O_n))\xi_\f} = \overline{\bigcup_k\pi_\f(\B(\O_k))\xi_\f}=
\overline{\pi_\f(\gB(W))\xi_\f} = \H_\f \ .
\]
Let $\O\in\K(W)$ and $\V_\O$ a connected neighborhood of the identity of $G$ such that $g\O\in\K(W)$ for all $g\in\V_\O$. For each fixed $g\in\V_\O$, the formula 
\[
U_\f(g)\pi_\f(X)\xi_\f = \pi_\f(\a_g^\O(X)), \quad X\in\B(\O),
\]
defines an isometric map $U_\f(g)$ from $\pi_\f(\B(\O))\xi_\f$  to  $\pi_\f(\B(g\O))\xi_\f$, thus a unitary on $\H_\f$. 

One can immediately see that $U(gh) = U(g)U(h)$ if $g,h, gh$ belong to $\V_\O$, so $U$ extends to a unitary representation of a covering of $G$, independent of $\O$, implementing the covariance. The uniqueness is clear.
\end{proof}
\\[-1cm]
\subsection{The net $\A_+$ and its induced representations}
\label{a+}

We first recall the definition of a local M\"obius covariant  of von Neumann algebras on $\mathbb R$.
See \cite{LN} for more details. We denote by $\G$ the M\"obis group, that naturally acts on $S^1$.

\subsubsection{M\"obius covariant nets}

A \emph{local, M\"obius covariant net  of  von Neumann algebras $\A$ on $S^1$} is a map
$
\A:I\to\A(I)
$
from $\I$, the set of (open, proper) intervals of $S^1$, to the set of von Neumann algebras on a Hilbert space $\H$ that verifies the following properties:
\begin{description}
\item$\textnormal{\textit{Isotony:}}$ {If $I_1$, $I_2$ belong to $\I$  and $I_1\subset I_2$, then
$
\A(I_1)\subset\A(I_2)\ .
$}
\end{description}
\begin{description}
\item\textnormal{\textit{M\"obius covariance:}}
{There is a
strongly continuous unitary representation $U$ of ${\bf G}$ on $\H$ such that
$
U(g)\A(I)U(g)^*=\A(gI), \ g\in {\bf G},\ I\in\I.
$}
\end{description}
\begin{description}
\item
\textnormal{\textit{Positivity of the energy:}}
{$U$ is a positive energy representation.} 
\end{description}
\begin{description}
\item\textnormal{\textit{Existence and uniqueness of the vacuum:}}
{There exists a unique (up to a phase) unit $U$-invariant vector $\Omega$ 
(vacuum) and $\Omega$ is
cyclic for the von Neumann algebra $\vee_{I\in\I}\A(I)$}
\end{description}
\begin{description}
\item$\textnormal{\textit{Locality:}}$
{If $I_1$ and $I_2$ are disjoint intervals, the von Neumann algebras $\A(I_1)$ and $\A(I_2)$ commute:}
$
\A(I_1)\subset\A(I_2)'
$
\end{description}
The uniqueness (up to a phase) of the vacuum vector $\Omega$ is equivalent to the irreducibility of $\A$ (i.e. $\big(\cup_I \A(I)\big)''=B(\H)$) and to the factoriality of the local von Neumann algebras $\A(I)$, see \cite{BGL, LN}.

\medskip

\noindent
A \emph{local M\"obius covariant net on $\RR$} is the restriction of a local M\"obius covariant net on $S^1$ to $\RR = S^1\setminus \{-1\}$ (identification by the stereographic map).
\medskip

\noindent
Concerning completely rational nets, we refer the reader to \cite{KLM}.
\medskip

\subsection{Inducing representations}

Let then $\A$ be a local M\"obius covariant net of von Neumann algebras on $\RR$, that we shall always assume to be \emph{completely rational}, on a Hilbert space $\H=\H_\A$. If $L\subset\RR$ is an open, non-empty subset, we shall denote by $\gA(L)$ the $C^*$-algebra generated by $\A(I)$ as $I$ runs in the intervals contained in $L$.

By a \emph{locally normal representation} $\pi$ of $\gA(\RR)$ on a Hilbert space $\H_\pi$ we shall mean a representation $\pi$ of $\gA(\RR)$ on $\H_\pi$ such that $\pi |_{\A(I)}$ is normal for all intervals $I$ of $\RR$. (If $\H_\pi$ is separable, every representation of $\gA(\RR)$ is locally normal.)

By a \emph{representation} (or DHR representation) $\pi$ of $\gA(\RR)$ we shall mean as usual a consistent family of normal representations $\pi_I$ of $\A(I)$ on $\H_\pi$ associated to intervals of $S^1 = \RR\cup\{\infty\}$. 

There is a natural correspondence between a representation $\pi$ of $\A$ and a representation of $\pi$ of $\gA(\RR)$ such that $\pi|_{\gA(\RR\setminus I)}$ is normal for every open non-empty interval $I$ of $\RR$, see \cite[Appendix]{KLM}.
 
Starting with the local, M\"obius covariant net $\A$ on $\H$, we have a local $\G$ covariant net $\A_+$ on $M_+$ defined by
\be\label{A_+}
\A_+(\O) \equiv \A(I)\vee\A(J), \quad \O=I\times J \in \K_+\ .
\ee
Note that by the split property $\A_+(\O)$ is naturally isomorphic to $\A(I)\otimes\A(J)$, and we have a left identification of $\A(I)$ with the subalgebra $\A(I)\otimes \mathbb C$ of $\A(\O)$ and a right identification of $\A(J)$ with the subalgebra $\mathbb C\otimes \A(J)$ of $\A(\O)$.

The unitary representation $U$ of the M\"obius group $\G$ giving the covariance for $\A$ also gives the $\G$-covariance for $\A_+$:
\[
U(g)\A_+(\O)U(g)^* = \A_+(g\O)
\]
where $\O=I\times J$ and $g\O\equiv gI\times gJ$.

Note that the vacuum vector $\Omega$ is bicyclic for $\A(I)\subset\A_+(\O)$ and for $\A(J)\subset\A_+(\O)$ with the left and right identification.

We explicitly recall  the following basic fact.
\begin{proposition}
The net $\A_+$ is covariantly locally isomorphic to $(\A\otimes \A)|_{M_+}$, the restriction to $M_+$ of the 2D net  $\A\otimes \A$ on $M$ with covariance unitary representation $U\otimes U$ (diagonal action of $\G\subset\G\times\G$).
 $\A_+$ and $(\A\otimes \A)|_{M_+}$ are not unitarily equivalent.
 \end{proposition}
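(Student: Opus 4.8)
The plan is to establish the two assertions separately. For the local isomorphism, I would fix a double cone $\O = I\times J \in \K_+$ and compare $\A_+(\O) = \A(I)\vee\A(J)$ with $(\A\otimes\A)(\O) = \A(I)\otimes\A(J)$ acting on $\H_\A\otimes\H_\A$. By the split property of the completely rational net $\A$ (which holds since $\overline I\cap\overline J=\emptyset$ for $\O\in\K_+$), the von Neumann algebra $\A(I)\vee\A(J)$ generated on $\H_\A$ is spatially isomorphic to the tensor product $\A(I)\otimes\A(J)$, via an isomorphism $\Phi_\O$ sending $x\vee y$ (with $x\in\A(I)$, $y\in\A(J)$) to $x\otimes y$. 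I would then check compatibility: if $\O\subset\tilde\O$, both $\Phi_\O$ and $\Phi_{\tilde\O}$ send $x\vee y$ to $x\otimes y$ on the common generators, so $\Phi_{\tilde\O}|_{\A_+(\O)} = \Phi_\O$. This gives the local isomorphism.

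Next I would upgrade this to $\G$-covariant local isomorphism. Both nets carry the covariance representation $U$ on $\H_\A$ and $U\otimes U$ on $\H_\A\otimes\H_\A$ for the diagonal action $g\mapsto (g,g)$ of $\G$. For $g$ and $\O$ path-connected to the identity within $M_+$ one has $U(g)(x\vee y)U(g)^* = (U(g)xU(g)^*)\vee(U(g)yU(g)^*)$ with $U(g)xU(g)^*\in\A(gI)$ and $U(g)yU(g)^*\in\A(gJ)$, while $(U\otimes U)(g)(x\otimes y)(U\otimes U)(g)^* = (U(g)xU(g)^*)\otimes(U(g)yU(g)^*)$; applying $\Phi$ on both sides yields the required intertwining relation $ (U\otimes U)(g)\Phi_\O(X)(U\otimes U)(g)^* = \Phi_{g\O}(U(g)XU(g)^*)$. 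This is essentially formal once the split isomorphism is in place.

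For the final claim — that $\A_+$ and $(\A\otimes\A)|_{M_+}$ are \emph{not} unitarily equivalent — I would argue by looking at the vacuum. On the product net, the vacuum is $\Omega\otimes\Omega\in\H_\A\otimes\H_\A$, and the associated vacuum vector separates the two tensor factors: $\A(\RR)\otimes\mathbb C$ and $\mathbb C\otimes\A(\RR)$ are mutually commuting and each has $\Omega\otimes\Omega$ as a cyclic (indeed bicyclic) vector, so the two-sided von Neumann algebra generated by all $(\A\otimes\A)(\O)$, $\O\in\K_+$, is not irreducible — it has nontrivial center, or more precisely the commutant of $\bigvee_{\O}(\A\otimes\A)(\O) = \A(\RR)\otimes\A(\RR)$ on $\H_\A\otimes\H_\A$ is trivial, but the left subnet $I\mapsto\A(I)\otimes\mathbb C$ has nontrivial commutant $\mathbb C\otimes\A(\RR)$. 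By contrast, for $\A_+$ the analogous left subnet $I\mapsto\A(I)\otimes\mathbb C$ gets mapped under the identification to $\A(I)\subset B(\H_\A)$, and by Haag duality / irreducibility of $\A$ on $S^1$ the commutant $\bigl(\bigvee_I\A(I)\bigr)' = \A(\RR)' = \mathbb C$ on $\H_\A$. Thus a unitary equivalence $V:\H_\A\to\H_\A\otimes\H_\A$ intertwining the nets would have to carry a subnet with trivial relative commutant to one with nontrivial relative commutant, a contradiction. Equivalently, one can compare Hilbert space dimensions: $\H_\A$ is (for a nontrivial completely rational net) infinite-dimensional, but $\H_\A\otimes\H_\A$ is ``bigger'' in the precise sense that $(\A\otimes\A)(\O)$ on $\H_\A\otimes\H_\A$ is not in standard form with respect to the diagonal-generated algebra — I would phrase the cleanest version in terms of the relative commutant of the left chiral subalgebra as above.

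The main obstacle is the non-equivalence part: one must isolate an invariant that genuinely distinguishes the two representations of the same abstract net. The split isomorphism is standard and the covariance check is bookkeeping, but for the negative statement I expect to lean on the irreducibility/Haag duality of $\A$ on $S^1$ to conclude that the left chiral subnet $\A(I)\subset\A_+(\O)$ acts with trivial global relative commutant on $\H_\A$, whereas the corresponding left subnet $\A(I)\otimes\mathbb C\subset(\A\otimes\A)(\O)$ on $\H_\A\otimes\H_\A$ has global relative commutant $\mathbb C\otimes\A(\RR)\neq\mathbb C$ — and the value of this relative commutant is manifestly preserved by any unitary equivalence, giving the contradiction.
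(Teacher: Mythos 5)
Your handling of the local isomorphism and of covariance coincides with the paper's, which simply invokes the split property of $\A$ (citing \cite{DL}); the compatibility and intertwining checks you describe are exactly the routine content behind that citation. For the non-equivalence you take a genuinely different route. The paper argues \emph{at the boundary}: for the double cone $\O=(-1,0)\times(0,1)$ touching $x=0$, the algebra generated by $\A_+(\O_1)$ over $\O_1\in\K_+$, $\O_1\subset\O$, is $\A(-1,0)\vee\A(0,1)$, and since the split property fails for \emph{contiguous} intervals this algebra is not carried onto $\A(-1,0)\otimes\A(0,1)$ by $xy\mapsto x\otimes y$, whereas a unitary equivalence of the two nets would produce exactly such an isomorphism. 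You instead argue \emph{globally}, via the commutant of the left chiral subnet: $\bigvee_I\A(I)=B(\H_\A)$ for $\A_+$ versus $\bigvee_I\A(I)\otimes\mathbb C=B(\H_\A)\otimes\mathbb C$ for the product net. Both obstructions are legitimate; yours avoids invoking the (true, but not entirely trivial) failure of split for contiguous intervals, while the paper's is purely local and makes visible that the difference between the two nets lives precisely at the boundary.

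One step in your version is asserted rather than proved: that the relative commutant of the left subnet is ``manifestly preserved by any unitary equivalence.'' In the paper's sense a unitary equivalence only matches the double-cone algebras $\A_+(\O)\leftrightarrow(\A\otimes\A)(\O)$; it is not given that it respects the chiral decomposition. To close the argument you should note that the left subalgebra is intrinsic to the net: $\A(I)\otimes\mathbb C=\bigcap_J(\A\otimes\A)(I\times J)$ (by the tensor commutation theorem together with $\bigcap_J\A(J)=\mathbb C$, the intersection running over all $J$ with $I\times J\in\K_+$), while $\bigcap_J\A_+(I\times J)\supset\A(I)$. Any implementing unitary $V$ would then give $B(\H_\A\otimes\H_\A)=V B(\H_\A)V^*\subset V\bigl(\bigvee_I\bigcap_J\A_+(I\times J)\bigr)V^*=\bigvee_I\bigl(\A(I)\otimes\mathbb C\bigr)=B(\H_\A)\otimes\mathbb C$, a contradiction. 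Two minor points: the aside about ``comparing Hilbert space dimensions'' should be dropped (both spaces are separable infinite-dimensional, there is nothing there), and the commutant of $B(\H_\A)\otimes\mathbb C$ is $\mathbb C\otimes B(\H_\A)$, not $\mathbb C\otimes\A(\RR)$ --- immaterial for nontriviality, but worth stating correctly.
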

 \begin{proof}
The local isomorphism follows by the split property of $\A$, see \cite{DL}. That $\A_+$ and $(\A\otimes \A)|_{M_+}$ are not equivalent follows, for example, from the fact that the split property fails for contiguous intervals so the von Neumann algebra associated with $\O=(-1,0)\times (0,1)$ by the net $\A_+$ is not naturally isomorphic with $\A(-1,0)\otimes\A(0,1)$.
 \end{proof}
 \\[-0,7cm]
Therefore $\A_+$ and $(\A\otimes \A)|_{M_+}$ share the same representations, indeed one could consider one a representation of the other.
\medskip

\noindent
If $\pi$ is a locally normal representation of $\gA(\RR)$ on a Hilbert space $\H_\pi$, there is an induced representation $\pi_+$ of $\A_+$ on $\H_\pi$ given by
\[
\pi_+ {_\O}(XY) \equiv \pi(XY), \quad X\in\A(I),\ Y\in \A(J),\quad \O=I\times J \in \K_+\ .
\]
Clearly $\pi_+$ is $\G$-covariant with positive energy if so is $\pi$.
Namely there is a unitary representation $U_\pi$ of $\G$ with positive energy such that
\[
U_{\pi}(g){\pi}_{\O}(X)U_{\pi}(g)^* = {\pi_+}_{g\O}(U(g)XU(g)^*),\quad X\in\A_+(\O)\ ,
\]
with $U$ the unitary covariance representation of $\G$.

Note that this is always the case if $\pi$ is a (DHR) representation of $\A$,  
because every representation of a completely rational net is $\G$-covariant with positive energy.

If $\pi_0$ is the vacuum (i.e. identity) representation of $\A$, we call ${\pi_0}_+$ the \emph{vacuum representation of} $\A_+$.
\begin{lemma}\label{lemma1}
Let $\pi$ be a representation of $\A_+$ on a Hilbert space $\H$. The following are equivalent:

$(i)$: $\pi$ is $\G$-covariant with positive energy and there is a  $U_{\pi}$-invariant unit vector $\Omega\in\H$ which is bicyclic for $\pi_\O(\A(I))\subset\pi_\O(\A_+(\O))$ and for $\pi_\O(\A(J))\subset\pi_\O(\A_+(\O))$, $\O\equiv I\times J$, with the left and right identification.

$(ii)$: $\pi$ is unitarily equivalent to the vacuum representation ${\pi_0}_+$.
\end{lemma}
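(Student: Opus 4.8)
The implication $(ii)\Rightarrow(i)$ is the easy direction: if $\pi$ is unitarily equivalent to ${\pi_0}_+$, then it inherits the $\G$-covariance with positive energy of ${\pi_0}_+$ (coming from $U_\pi=U$), and the unit vector $\Omega$ of the defining (vacuum) representation of $\A$ is $U$-invariant and, as recalled just before the lemma, bicyclic for $\A(I)\subset\A_+(\O)$ and $\A(J)\subset\A_+(\O)$ in the left and right identifications. Transporting through the unitary equivalence gives $(i)$.

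For the main direction $(i)\Rightarrow(ii)$, the plan is to produce, $\O$ by $\O$, a coherent family of isomorphisms intertwining $\pi$ with ${\pi_0}_+$ and then show it is implemented by a single unitary. First I would fix $\O=I\times J\in\K_+$ and use the split property: $\A_+(\O)\isom\A(I)\otimes\A(J)$, with $\A(I)$ and $\A(J)$ the left/right subfactors. Now $\pi_\O$ restricted to the left subfactor $\A(I)$ is a normal representation of the type $III$ factor $\A(I)$ with cyclic vector $\Omega$ (cyclicity of $\Omega$ for $\pi_\O(\A(I))$ being part of the bicyclicity hypothesis); similarly on the right. The key point is that $\Omega$ is \emph{bicyclic}, i.e. cyclic for both $\pi_\O(\A(I))$ and $\pi_\O(\A_+(\O))$: since $\A(I)$ is a type $III$ factor, a normal representation with a cyclic \emph{and separating} vector is quasi-equivalent, hence (factors being involved) unitarily equivalent to the GNS representation of the corresponding state; and the vacuum state $\omega_\Omega$ restricted to $\A(I)$ is, by M\"obius covariance and the Reeh–Schlieder property of $\A$, the same state (up to the covariance unitary) as $\omega$ on $\A(I)$ in the vacuum net. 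So there is a unitary $V_\O:\H_{{\pi_0}_+}\to\H$ with $V_\O\Omega=\Omega$ intertwining ${\pi_0}_+|_{\A(I)}$ with $\pi_\O|_{\A(I)}$; the bicyclicity of $\Omega$ for the \emph{pair} $\A(I)\subset\A_+(\O)$ is exactly what is needed (via an argument of the type of Lemma \ref{bicyclic}, or directly) to promote this to an intertwiner of the full $\A_+(\O)$, because a normal representation of $\A_+(\O)$ is determined, given its restriction to the irreducible subfactor $\A(I)$ with a bicyclic vector fixed, up to a unitary fixing that vector — and here both representations have $\Omega$ with the stated bicyclicity, so the same reconstruction applies. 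One gets $V_\O$ with $V_\O\Omega=\Omega$ and $V_\O\,{\pi_0}_+{}_\O(X)\,V_\O^*=\pi_\O(X)$ for all $X\in\A_+(\O)$.

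Next I would show $V_\O$ is independent of $\O$. For $\O\subset\tilde\O$, both $V_\O$ and $V_{\tilde\O}$ fix $\Omega$ and intertwine the two representations on $\A_+(\O)$ (using $\pi_{\tilde\O}|_{\B(\O)}=\pi_\O$ and the analogous compatibility for ${\pi_0}_+$); hence $V_{\tilde\O}^*V_\O$ commutes with ${\pi_0}_+{}_\O(\A_+(\O))$ and fixes $\Omega$. Running $\O$ through an increasing exhaustion of $M_+$ and using that $\Omega$ is cyclic for $\gB(M_+)=\vee_\O {\pi_0}_+{}_\O(\A_+(\O))$ in the vacuum representation (Reeh–Schlieder for $\A$, cf. Proposition \ref{sep}), one concludes $V_{\tilde\O}^*V_\O=\id$, so $V\equiv V_\O$ is a well-defined unitary intertwining $\pi$ with ${\pi_0}_+$ globally. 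Finally, covariance: $\G$-covariance of $\pi$ with positive energy and $U_\pi$-invariance of $\Omega$ force $VU(g)V^*=U_\pi(g)$, because both are positive-energy representations of $\G$ fixing $\Omega$ and implementing the same automorphisms of $\gB(M_+)$ on the respective Hilbert spaces, and such an implementing representation fixing the vacuum is unique (again as in the proof of Lemma \ref{Wcov}, or by Borchers-type uniqueness). Thus $\pi$ is unitarily equivalent to ${\pi_0}_+$, proving $(ii)$.

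The main obstacle is the first step: passing from the obvious identification of the \emph{restricted} representations $\pi_\O|_{\A(I)}\isom{\pi_0}_+{}_\O|_{\A(I)}$ (which is just uniqueness of the GNS representation of a faithful normal state on a type $III$ factor) to an identification of the \emph{full} $\A_+(\O)\isom\A(I)\otimes\A(J)$ that is compatible on both the left and the right simultaneously and, crucially, that glues across different $\O$'s. This is precisely where the bicyclicity hypothesis in $(i)$ — cyclicity for both the subfactor and the bigger algebra, on both sides — is essential and must be used in the form of the isometry/intermediate-subfactor argument underlying Lemma \ref{bicyclic}; without it one could only match $\pi$ with ${\pi_0}_+$ locally and up to inner perturbations, not get the single global intertwining unitary. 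The M\"obius covariance is what makes the vacuum state on each $\A(I)$ literally the right one (Reeh–Schlieder), turning the abstract quasi-equivalence into a concrete unitary fixing $\Omega$.
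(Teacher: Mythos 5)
Your direction $(ii)\Rightarrow(i)$ is fine and matches the paper. The main direction, however, has a genuine gap at exactly the point you flag as "the main obstacle", and the mechanism you propose to overcome it does not work. First, the assertion that the $U_\pi$-invariant vector $\Omega$ induces, on $\pi_\O(\A(I))$, \emph{the vacuum state} of $\A(I)$ "by M\"obius covariance and Reeh--Schlieder" is unjustified: invariance of $\Omega$ under $U_\pi$ only gives $\G$-invariance of the corresponding locally normal state, and identifying such a state with the vacuum is a nontrivial uniqueness statement that is nowhere established (and, even granted it on $\A(I)$, you would need the state on all of $\A_+(\O)$ to be the vacuum state of $\A(I)\vee\A(J)$, which is essentially the conclusion of the lemma). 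Second, and more seriously, the "promotion" step is false as stated: $\A(I)\subset\A_+(\O)=\A(I)\vee\A(J)$ is \emph{not} an irreducible (nor finite-index) subfactor --- its relative commutant contains $\A(J)$ --- so a normal representation of $\A_+(\O)$ is not determined by its restriction to $\A(I)$ together with a bicyclic vector. For instance, composing with an automorphism of the form $\id\otimes\beta$ of $\A(I)\otimes\A(J)$ changes the representation of $\A_+(\O)$ without changing its restriction to the left subfactor, and bicyclicity of $\Omega$ for the left inclusion alone cannot detect this. Consequently the unitaries $V_\O$ are never actually constructed, and the subsequent gluing and covariance arguments (which are themselves fine) have nothing to glue. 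Lemma \ref{bicyclic} is of no help here either, since it concerns irreducible finite-index squares.

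The paper's proof supplies precisely the missing ingredient by a different route: it introduces the wedge algebras $\C(I)=\A_\pi(W_I)$ generated by all $\pi_\O(\A_+(\O))$ with $\O\subset W_I$, checks that $I\mapsto\C(I)$ is a M\"obius covariant net with cyclic and separating vector $\Omega$, defines the chiral subalgebras $\pi(\A(I))=\bigcap_J\pi_{I\times J}(\A_+(I\times J))\subset\C(I)$, and then uses the bicyclicity hypothesis together with the Bisognano--Wichmann property of $\A$ and Tomita--Takesaki/Takesaki's theorem (a cyclic von Neumann subalgebra of $\pi(\A(I))'$ globally invariant under the modular group must be everything) to conclude $\pi(\A(I))=\C(I)$. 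This equality is what pins down the relative position of the left and right chiral algebras and yields $\pi_\O(\A_+(\O))=\pi(\A(I))\vee\pi(\A(J))$, i.e.\ $\pi={\pi_0}_+$ for the net $\C\cong\A$. Your sketch has no substitute for this modular-theoretic step; to repair it you would need either to prove that the invariant state is the vacuum state on every $\A_+(\O)$ (not just on the chiral pieces) or to reproduce the wedge-algebra argument.
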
 
\begin{proof}
$(ii)\Rightarrow (i)$ is immediate by the Reeh-Schlieder property of the vacuum vector for $\A$, see \cite{LN} or \cite{BGL}.

$(i)\Rightarrow (ii)$:
With $I$ an interval of $\RR$, set $\C(I)\equiv\A_\pi(W_I)$ where $W_I \equiv \{\langle t,x\rangle :\, t\pm x\in I, \, x>0\}$ is the left wedge of $M_+$ spanned by $I$ and $\A_\pi(W_I)$ is the von Neumann algebra generated by $\pi_\O(\A_+(\O))$ as $\O\in\K_+$ runs within $W_I$. Note that $\C$ is a net, $\Omega$ is cyclic for each $\C(I)$  because $ \A_\pi(W_I)\supset \pi_\O(\A_+(\O))$ if $\O\subset W_I$, separating for each $\C(I)$ because
$ \A_\pi(W_I)$ commutes with $\pi_\O(\A_+(\O))$ if $\O\subset W'_I$, and $U_\pi$ implements the $\G$-covariance for $\C$. Therefore $\C$ is a M\"obius covariant net on $\RR$ (we shall soon show that $\C$ is local and irreducible), in its vacuum representation.

Now we set 
\[
\pi(\A(I))\equiv \bigcap_{J}\pi_\O(\A(\O)), \quad \O=I\times J,
\]
(intersection over all intervals of $\RR$). By the split property $\pi(\A(I))$ is naturally isomorphic to $\A(I)$.

Nowt $\pi(\A(I))\subset\C(I)$ for every open interval $I$; in fact if $I_0\Subset I$ is an open interval with $I_0\subset I$, then $\pi(\A(I_0))\subset\C(I)$ with the left identification because $I_0\times J_\e\subset W_I$ for a suitable small interval $J_\e$. 

Now the map $I\mapsto \pi(\A(I))$ is a M\"obius covariant net on $\H$ in the vacuum representation, as $\Omega$ is cyclic for it by the bicyclicity assumption. Moreover $\pi(\A(I))$ is local because if $I_1, I_2$ are disjoint intervals there is an interval $I\supset I_1, I_2$ and $\pi(\A(I_1))$ and $\pi(\A(I_2))$ are commuting subalgebras of $\pi(\A(I))$.

So, by the bicyclicity property of $\Omega$, we have $\pi(\A(I))=\C(I)$ with the left (and analogously with the right) identification; this follows by the Bisognano-Wichmann property for $\A$, see \cite{BGL}, and the Tomita-Takesaki theory ($\C(I)'$ is a cyclic von Neumann subalgebra of $\pi(\A(I))'$ globaly invariant under the modular group action).

Then the net $\C_+$ is given by
\[
\C_+(\O) = \C(I)\vee\C(J) = \pi(\A(I))\vee\pi(\A(J)) = \pi_\O(\A(\O)), \quad \O=I\times J,
\]
namely $\pi ={\pi_0}_+$ where $\pi_0$ is the vacuum representation of $\C$; as $\C$ is naturally isomorphic to $\A$, the proof is completed.
\end{proof}
If $E\subset M_+$, we denote by $\gA_+(E)$ the $C^*$-algebra on $\H_\A$ associated with $E$ by $\A_+$ as in \eqref{quotient}.
\begin{lemma}\label{irrW}
We have:

$(a)$: If $W_1 =(-\infty, -\delta_1)\times (\delta_2\times \infty)$ is a wedge contained in $W$, $\delta_1, \delta_2 \geq 0$, then $\gA_+(W_1)$ is the $C^*$-algebra generated by $\gA(-\infty, -\delta_1)$ and $\gA(\delta_2\times \infty)$.

$(b)$: $\gA_+(W_1)$ is a simple $C^*$-algebra.

$(c)$: If $\pi$ is a locally normal representation of $\gA(\RR)$ then $\pi_+(\gA_+(W))''=\pi(\gA(\RR))''$.
\end{lemma}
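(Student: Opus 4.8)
The plan is to establish (a), (b), (c) in turn: (b) is a direct application of Lemma~\ref{simple}, (a) is an unravelling of the definitions governed by the geometry of $\K(W_1)$, and (c) follows by combining (a) with the strong additivity of $\A$.

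For (a), the key point is the explicit description of $\K(W_1)$: writing a double cone as $\O=P\times Q$ in chiral coordinates, $\O\in\K(W_1)$ precisely when $\overline P$ is a compact subinterval of $(-\infty,-\delta_1)$ and $\overline Q$ a compact subinterval of $(\delta_2,\infty)$; such an $\O$ is automatically in $\K_+$ since $W_1\subset W\subset M_+$. For such $\O$ one has, by~\eqref{A_+} and the left/right identifications, $\A_+(\O)=\A(P)\vee\A(Q)$ with $\A(P)\subset\gA(-\infty,-\delta_1)$ and $\A(Q)\subset\gA(\delta_2,\infty)$; hence, taking the $C^*$-algebra generated as $\O$ runs over $\K(W_1)$, $\gA_+(W_1)$ is generated by $\gA(-\infty,-\delta_1)$ and $\gA(\delta_2,\infty)$. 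Conversely, given an interval $P$ with $\overline P$ compact in $(-\infty,-\delta_1)$, pick any interval $Q$ with $\overline Q$ compact in $(\delta_2,\infty)$; then $P\times Q\in\K(W_1)$, so $\A(P)\subset\A_+(P\times Q)\subset\gA_+(W_1)$, and symmetrically each such $\A(Q)\subset\gA_+(W_1)$, so $\gA(-\infty,-\delta_1)$ and $\gA(\delta_2,\infty)$ lie in $\gA_+(W_1)$. (Inner continuity of $\A$, i.e. $\A(I)=\bigvee_{I_0\Subset I}\A(I_0)$, handles the intervals whose closures touch $-\delta_1$ or $\delta_2$.) This proves (a).

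For (b), I would apply Lemma~\ref{simple} with $E=W_1$ and $\B=\A_+$. Two hypotheses must be checked. First, $\K(W_1)$ is directed: given $\O_i=P_i\times Q_i\in\K(W_1)$, $i=1,2$, the convex hull of $\overline{P_1}\cup\overline{P_2}$ (resp. of $\overline{Q_1}\cup\overline{Q_2}$) is again a compact subinterval of $(-\infty,-\delta_1)$ (resp. $(\delta_2,\infty)$), so it spans a double cone in $\K(W_1)$ containing $\O_1$ and $\O_2$. Second, each $\A_+(\O)=\A(P)\vee\A(Q)$ is a type $III$ factor on the separable Hilbert space $\H_\A$: since $\delta_1,\delta_2\ge 0$ forces $\overline P\cap\overline Q=\emptyset$, the split property of $\A$ gives an isomorphism $\A_+(\O)\cong\A(P)\otimes\A(Q)$, a tensor product of two type $III_1$ factors (the local algebras of a completely rational net being type $III_1$ factors), hence a type $III$ factor. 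Lemma~\ref{simple} then yields at once that $\gA_+(W_1)=C^*(\A_+,W_1)$ is a simple (in fact purely infinite, Cuntz standard form) $C^*$-algebra.

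For (c), apply (a) with $\delta_1=\delta_2=0$: $\gA_+(W)$ is the $C^*$-algebra generated by $\gA(-\infty,0)$ and $\gA(0,\infty)$, which, under the left and right identifications, are genuine $C^*$-subalgebras of $\gA(\RR)$ acting on $\H_\A$. As $(\pi_+)_\O$ restricts to $\pi$ on each such $\A(P)$ and $\A(Q)$, we get $\pi_+(\gA_+(W))''=\pi(\gA(-\infty,0))''\vee\pi(\gA(0,\infty))''$, and $\pi_+(\gA_+(W))''\subseteq\pi(\gA(\RR))''$ is immediate from $\gA(-\infty,0),\gA(0,\infty)\subset\gA(\RR)$. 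For the reverse inclusion it suffices to show $\pi(\A(I))\subseteq\pi(\gA(-\infty,0))''\vee\pi(\gA(0,\infty))''$ for every interval $I$ of $\RR$. If $0$ is not interior to $I$ this is clear, $I$ being contained in $(-\infty,0)$ or in $(0,\infty)$. If $0$ is interior to $I$, write $I\setminus\{0\}=I_-\sqcup I_+$ with $I_-\subset(-\infty,0)$ and $I_+\subset(0,\infty)$; strong additivity of $\A$ — valid since $\A$ is completely rational — gives $\A(I)=\A(I_-)\vee\A(I_+)$, and, $\pi|_{\A(I)}$ being normal (hence weakly continuous on the unit ball), $\pi(\A(I))=\pi(\A(I_-))\vee\pi(\A(I_+))\subset\pi(\gA(-\infty,0))''\vee\pi(\gA(0,\infty))''$; passing to the weak closure over all $I$ gives (c). The step I expect to require the most care is precisely this use of strong additivity together with the normality of $\pi$ to push $\pi$ past the von Neumann join — it is the only point where a property of $\A$ beyond isotony, covariance and locality enters, and it must also cover the unbounded intervals $I$ (handled by the same device with inner continuity) — whereas the geometric identification of $\K(W_1)$ in (a) and the directedness check in (b) are routine.
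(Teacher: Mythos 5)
Your proof is correct and follows essentially the same route as the paper's: (a) by unravelling the definition of $\A_+$ and $\K(W_1)$, (b) by realizing $\gA_+(W_1)$ as an inductive limit of the type $III$ factors $\A_+(\O)$ (i.e.\ Lemma \ref{simple}), and (c) by writing $\gA_+(W)=\gA(\RR\setminus\{0\})$ and using strong additivity together with the local normality of $\pi$ to absorb $\A(I)$ for intervals $I$ containing $0$. You merely spell out the directedness and type $III$ checks that the paper leaves implicit.
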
 
\begin{proof}
$(a)$ is immediate by the definition of $\A_+$. 

$(b)$ holds true as  $\gA_+(W_1)$ is the inductive limit of the $\A_+(\O)$'s, $\O\in\K(W_1)$ and the each $\A(\O)$ is simple because it is a type III factor.

$(c)$ We have $\gA_+(W) = \gA(\RR\setminus\{0\})$ so $\pi_+(\gA_+(W))'' = \pi(\gA(\RR\setminus\{0\}))''\subset\gA(\RR)''$. So we have to show that $\pi(\gA(\RR\setminus\{0\}))''=\gA(\RR)''$, namely that
$\pi(\gA(\RR\setminus\{0\}))''\supset\pi(\A(I))$ if $I$ is an interval of $\RR$ and $0\in I$. But this is true because by strong additivity $\gA(I\setminus \{0\})'' =\A(I)$, so $\pi(\gA(I\setminus \{0\}))'' =\pi(\A(I))$ by the local normality of $\pi$.
\end{proof}
\begin{proposition}\label{repA_+}
Let $\s$ be a representation of $\A_+$ on a Hilbert space $\H$. Suppose there is a unit vector $\Omega\in\H$ such that the state $\o\equiv (\Omega,\cdot\ \Omega)$ is bicyclic for $\s_\O(\A(I))\subset\s_\O(\A_+(\O))$ and for $\s_\O(\A(J))\subset\s_\O(\A_+(\O))$, for every double cone $\O=I\times J$, with the left and right identification, and that $\o$ is bicyclic for $\s_{\tilde\O}(\A(\O))\subset\s_{\tilde\O}(\A(\tilde\O))$ too, if $\O\subset\tilde\O$ belong to $\K(W)$.

Let $\H_0$ be the Hilbert subspace $\overline{\s_\O(\A_+(\O))\Omega}$ of $\H$ (independent of $\O$), $\s_0$ the restriction of $\s$ to $\H_0$ and assume that $\s_0$ is $\G$-covariant with positive energy and $\Omega$ a is a fixed $U_{\s_0}$-vector.

If $\Omega$ is separating for $\s(\gA_+(W_1))''$, for any wedge $W_1\subset W$ with positive distance to the boundary of $W$, then $\s$ is equivalent to $\pi_+$ with $\pi$ a (DHR) representation of $\A$.
\end{proposition}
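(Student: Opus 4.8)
The plan is to prove that $\sigma$ is $\G$-covariant with positive energy; once that is in hand the statement follows quickly, and the covariance is the real content.

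\emph{Step 1 (the vacuum part).} First I would apply Lemma~\ref{lemma1} to $\sigma_0$ on $\H_0$: its hypotheses are exactly the assumed $\G$-covariance and positivity of energy of $\sigma_0$ together with the bicyclicity of $\omega$ restricted to $\H_0$, so $\sigma_0\cong{\pi_0}_+$. I would also note that $\H_0$ is $\sigma$-invariant: for $a\in\RR$ the family $\K(W+(a,0))$ of double cones in the time-translate $W+(a,0)$ of $W$ is directed, so $\gA_+(W+(a,0))$ is a $C^*$-algebra for which $\Omega$ is cyclic on $\overline{\sigma(\gA_+(W+(a,0)))\Omega}=\H_0$ (independence of $\O$), hence preserves $\H_0$; as the $W+(a,0)$ exhaust $M_+$, so does all of $\sigma$. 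Identifying $\H_0$ with $\H_\A$, we may assume $\sigma_0$ is the vacuum representation of $\A_+$ and $\Omega$ the vacuum vector of $\A$.

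\emph{Step 2 (normality near the boundary point, via the separating hypothesis).} By Lemma~\ref{irrW}, $\gA_+(W)=\gA(\RR\setminus\{0\})$, and for a wedge $W_1\subset W$ of positive distance to $\partial W$ one has $\gA_+(W_1)=\gA(\RR\setminus K)$ with $K$ a closed interval having $0$ in its interior. On $\H_0$ the vector $\Omega$ is cyclic and separating for the vacuum von Neumann algebra $\sigma_0(\gA_+(W_1))''=\A((-\infty,a))\vee\A((b,\infty))$ — both by Reeh--Schlieder, its commutant containing $\A((a,b))$, see \cite{BGL}. Since $\H_0$ is $\sigma$-invariant, restriction to $\H_0$ is a normal surjective $*$-homomorphism $r\colon\sigma(\gA_+(W_1))''\to\sigma_0(\gA_+(W_1))''$, and the separating-vector hypothesis makes $r$ injective ($r(T)=0\Rightarrow T\Omega=0\Rightarrow T=0$); thus $r$ is an isomorphism, $\omega$ is faithful on $\sigma(\gA_+(W_1))''$, and the modular automorphism group of $(\sigma(\gA_+(W_1))'',\omega)$ is carried by $r$ onto the vacuum modular group of $(\A((-\infty,a))\vee\A((b,\infty)),\Omega)$, which by Haag duality and the Bisognano--Wichmann property of $\A$ (\cite{BGL}) is a one-parameter group of M\"obius transformations.

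\emph{Step 3 (reconstructing the covariance, and conclusion).} Letting $W_1$ run over all such sub-wedges (equivalently $K$ over all closed intervals with $0$ interior), the modular data of Step~2 fit together into a M\"obius action; feeding in the positivity of energy of $\sigma_0$ and the bicyclicity of $\omega$ for $\sigma_{\tilde\O}(\A_+(\O))\subset\sigma_{\tilde\O}(\A_+(\tilde\O))$ ($\O\subset\tilde\O\in\K(W)$), a Borchers--type argument should produce a strongly continuous positive-energy unitary representation of $\G$ on $\H$ implementing the covariance of $\sigma$, as in the proofs of Lemmas~\ref{lemma1} and~\ref{Wcov}. Once $\sigma$ is $\G$-covariant with positive energy, $\sigma|_W$ is a covariant representation of $\A_+|_W$, hence by Proposition~\ref{extpi} the unique covariant extension of itself; on the other hand $\sigma|_{\gA_+(W)}$ is a locally normal, M\"obius-covariant, positive-energy representation of $\gA(\RR\setminus\{0\})$, which — moving $0$ and the point at infinity by the covariance — extends to a DHR representation $\pi$ of $\A$, cf.\ \cite[Appendix]{KLM} and \cite{LN}. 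Then $\pi_+$ is $\G$-covariant and restricts on $W$ to the same representation, so by the uniqueness just invoked $\sigma=\pi_+$, as required.

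\emph{Main obstacle.} The delicate point is Step~3: manufacturing a genuine unitary representation of $\G$ on all of $\H$ out of modular data that a priori lives only on the vacuum subspace $\H_0$. The separating-vector hypothesis only controls complements of intervals containing the boundary point $0$, and one cannot promote this to full covariance (or to normality of $\pi$ on every $\gA(\RR\setminus I)$) by algebraic gluing alone — doing so naively would wrongly force $\pi$ to be a multiple of the vacuum. It requires the operator-algebraic machinery (Tomita--Takesaki together with a Borchers-type commutation theorem) already used in Lemmas~\ref{lemma1} and~\ref{Wcov}, and it is here that I expect the bicyclicity hypothesis (otherwise unused) and the positivity of the translation generator of $\sigma_0$ to be consumed.
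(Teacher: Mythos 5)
Your Steps 1 and 2 open exactly as the paper does: Lemma~\ref{lemma1} identifies $\s_0$ with ${\pi_0}_+$, and the separating hypothesis makes the restriction maps onto $\H_0$ into normal isomorphisms. But there is a genuine gap at Step 3, and you have put your finger on it yourself: you never actually construct the positive-energy unitary representation of $\G$ on all of $\H$, you only assert that ``a Borchers-type argument should produce'' it. As stated, that step is not a proof, and the modular-theoretic gluing you sketch in Step 2--3 (Bisognano--Wichmann on $\H_0$, then fitting the modular groups of the wedge algebras together across $\H$) is not carried out and is not obviously salvageable: the modular data you control lives on the vacuum subspace, and nothing in your argument produces unitaries acting on the orthogonal complement of $\H_0$.

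The missing idea is that no global covariance of $\s$ is needed at this stage. Since $r_\O\colon\s_\O(X)\mapsto\s_\O(X)|_{\H_0}$ is a normal isomorphism for each $\O$, you can set $\pi\equiv r^{-1}\cdot\pi_0$ and obtain \emph{directly} that $\s_\O=r_\O^{-1}\cdot{{\pi_0}_+}_\O={\pi_+}_\O$, i.e.\ $\s=\pi_+$ for a locally normal representation $\pi$ of $\gA(\RR)$ --- this is a purely local statement, double cone by double cone. The only remaining issue is whether $\pi$ is DHR, and the criterion recalled in Subsection~\ref{a+} (from the appendix of \cite{KLM}) says this holds precisely when $\pi$ is normal on $\gA(\RR\setminus K)$ for $K$ an interval. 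That normality is exactly what your Step 2 already delivers: $r_{W_1}$ being a normal isomorphism gives $\s|_{\gA_+(W_1)}\simeq r_{W_1}^{-1}\cdot\pi_0|_{\gA(I_1\cup J_1)}$ with $W_1=I_1\times J_1$, so $\pi$ is normal on the complement of the interval separating $I_1$ from $J_1$. At that point you are done; the covariance and positivity of energy of $\s=\pi_+$ then come for free because every DHR representation of a completely rational net is covariant with positive energy, as the paper notes. So the bicyclicity and separating hypotheses are consumed exactly where you used them in Steps 1--2; what you labeled the ``main obstacle'' dissolves once you invoke the \cite{KLM} normality criterion instead of trying to rebuild the M\"obius action on $\H$ by hand.
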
 
\begin{proof}
By Lemma \ref{lemma1}, the restriction $\s_0$ of $\s$ to $\H_0$ is equivalent to ${\pi_0}_+$ where $\pi_0$ is the vacuum representation of $\A$.

The restriction map $r_\O:\s_\O(X) \mapsto {\s_0}_\O(X)\equiv \s_\O(X)|_{\H_0}$, $X\in\A_+(\O)$, is normal and one-to-one because $\Omega$ is separating for $\s(\A_+(\O))$ and we have
\[
\s_\O = r^{-1}_\O\cdot {\s_0}_\O = r^{-1}_\O\cdot {{\pi_0}_+}_\O = {\pi_+}_\O \ ,
\]
where $\pi$ is the locally normal representation of $\gA(\RR)$ determined by $\pi(X)|_{\H_0} = \pi_0(X)$, $X\in\gA(\RR)$.

We want to show that $\pi$ is a DHR representation of $\A$.

The restriction map $\s(X) \mapsto \s_0(X)\equiv \s(X)|_{\H_0}$, $X\in\gA_+(W_1)$, extends to a normal map $r_{\text{\tiny $W_1$}}:\s(\gA_+(W_1))'' \to \s_0(\gA_+(W_1))''$ which is invertible because $\Omega$ is separating for $\s(\gA_+(W_1))''$. We have:
\be\label{ln}
\s|_{\gA_+(W_1)} = r_{\text{\tiny $W_1$}}^{-1}\cdot\s_0 |_{\gA_+(W_1)} \simeq  r_{\text{\tiny $W_1$}}^{-1}\cdot{\pi_0}_+ |_{\gA_+(W_1)} =  r_{\text{\tiny $W_1$}}^{-1}\cdot{\pi_0}|_{\gA(I_1\cup J_1)}
\ee
where $I_1$ and $J_1$ are the left and right half-lines such that $W_1 = I_1\times J_1$.

Therefore $\pi\equiv r^{-1}\cdot{\pi_0}$ is a locally normal representation of $\gA(\RR)$ such that such that
$\s = \pi_+$ and, by eq. \eqref{ln},  $\pi$ is normal on $\gA(I_1\cup J_1)$, hence $\pi$ is a DHR representation, see the appendix of \cite{KLM}.
\end{proof}
\section{Boundary CFT nets and their characterization}\label{BCFT}
A \emph{Boundary CFT net} of von Neumann algerbras on $M_+$ was defined in \cite{LR2}, here we give an equivalent, slightly different, definition (see Remark \ref{rem1}).
\smallskip

\noindent
Let $\A$ be a M\"obius covariant local net of factors on $\RR$.
\medskip

\noindent
A local {\em Boundary  CFT (BCFT)} net $\B$ associated 
with $\A$ is a local, isotonous map 
\[
\B:\O\in\K_+\mapsto \B(\O),
\]
where the $\B(\O)$ are von Neumann algebras on a Hilbert space $\H_\B$, such that
\smallskip

$(a)$ there is a unitary representation $U$ of the covering of the
M\"obius group $\bar\G$ with positive generator for
the subgroup of translations, such that  
\begin{equation} 
U(g)\B(\O)U(g)^* = \B(g\O) 
\end{equation}
whenever $g\in \bar\G$ takes the double cone
$\O=I\times J\in\K_+$ into another double-cone $g\O\equiv gI
\times gJ$ within $M_+$, with an
invariant vector unit $\Omega\in\H_\B$ (the vacuum vector). 
\smallskip

$(b)$  There is a representation $\pi$ of $\A$ on $\H_\B$ such that
$\B_+(\O)$ contains $\pi_+(\A_+(\O))$ with $\pi_+(\A_+(\O))'\cap\B(\O)=\mathbb C$
and 
\begin{equation} 
U(g)\pi_+(\A_+(\O))U(g)^* = \pi_+(\A_+(g\O))
\end{equation}
whenever $\O$ and $g\O$ belong to $\K_+$. 
\smallskip

$(c)$ The vector $\Omega$ is cyclic and separating for each $\B(\O)$, $\O\in\K_+$.
\medskip

\noindent
We shall say that $\B$ is irreducible if the $C^*$-algebra $\gB(M_+)$ is irreducible.
\begin{remark}\label{rem1}
 In \cite{LR2} a local, irreducible BCFT net $\B$ associated as above but the irreducibility of the inclusion  $\pi_+(\A_+(\O))'\cap\B(\O)=\mathbb C$ is not assumed there, moreover $\Omega$ is only assumed to be cyclic or separating although it has to be cyclic for $\gB(M_+)$ which is irreducible. On the other hand \emph{joint irreducibility} is assumed there:
\[
\pi_+(C^*(\A_+))''\vee\B(\O) = B(\H)
\]
for any double cone $\O\in\K_+$.

As shown in \cite{LR2}, the definition of a BCFT net implies the above properties $(a),(b),(c)$. 

On the other hand, if we assume the above properties $(a),(b),(c)$ and that $U(g)\in \pi_+(C^*(\A_+))''$, $g\in\bar\G$, then 
$\B$ is a BCFT net in the sense of \cite{LR2}.

While we shall deal here with the above definition, we shall obtain the property that $U(g)\in \pi_+(C^*(\A_+))''$ in Corollary \ref{posen}, so our constructed nets will be BCFT nets in the sense of \cite{LR2}.
\end{remark}
\noindent
We now give an equivalent formulation for a boundary CFT net.

\begin{proposition}
Let $\B$ a (local) BQFT net on $M_+$ that we assume to be $\bar\G$-covariant with positive energy. We assume that $\B$ contains $\A_+$ as an irreducible subnet, where $\A_+$ is the net on $M_+$ associated to a chiral local M\"obius covariant net $\A$ as above \eqref{A_+}. Suppose the vacuum vector $\Omega$ of $\B$ gives a bicyclic state for $\A(I)\subset\A_+(\O)$ and $\A(J)\subset\A_+(\O)$ with the left and right embedding, $\O= I\times J$.

Then $\B$ is a Boundary CFT net on $M_+$associated with $\A$.
\end{proposition}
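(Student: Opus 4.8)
\medskip
\noindent\textbf{Sketch of proof.} The plan is to verify the three defining properties $(a)$, $(b)$, $(c)$ of a Boundary CFT net for $\B$. Property $(a)$ is exactly the assumed $\bar\G$-covariance with positive energy of $\B$: it provides the unitary representation $U$ of $\bar\G$ with positive translation generator, the invariant unit vector $\Omega$, and the relation $U(g)\B(\O)U(g)^*=\B(g\O)$ in the required range of validity (on $M_+$ the conditions ``$g$ carries $\O\in\K_+$ onto a double cone $g\O\in\K_+$'' and ``$g,\O$ path connected to the identity within $M_+$'' coincide). Property $(c)$: $\Omega$ is cyclic for each $\B(\O)$ by the net axioms and separating for each $\B(\O)$ by Proposition \ref{sep}. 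The content is $(b)$, which I obtain from Proposition \ref{repA_+}.

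Write $\sigma$ for the representation of $\A_+$ realized by the subnet, so $\sigma_\O(\A_+(\O))\subset\B(\O)$, with $\sigma_\O(\A_+(\O))'\cap\B(\O)=\mathbb C$ by irreducibility and with $U$ implementing the M\"obius covariance of $\sigma$. I check that $\sigma$ satisfies the hypotheses of Proposition \ref{repA_+}. Bicyclicity of $\omega\equiv(\Omega,\,\cdot\,\Omega)$ for $\sigma_\O(\A(I))\subset\sigma_\O(\A_+(\O))$ and $\sigma_\O(\A(J))\subset\sigma_\O(\A_+(\O))$ is precisely the standing assumption on $\Omega$; since $\sigma_{\tilde\O}|_{\A_+(\O)}=\sigma_\O$, this also makes $\Omega$ cyclic for $\sigma_{\tilde\O}(\A_+(\O))$ whenever $\O\subset\tilde\O\in\K(W)$, which is the remaining bicyclicity requirement. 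The subspace $\H_0=\overline{\sigma_\O(\A_+(\O))\Omega}$ is globally $U$-invariant, because $U(g)\Omega=\Omega$ and $U$ normalises $\sigma$; hence $U|_{\H_0}$ is a positive-energy representation of $\bar\G$ implementing the covariance of $\sigma_0=\sigma|_{\H_0}$ and fixing $\Omega$. Lastly, given a wedge $W_1\subset W$ at positive distance from $\partial W$, one chooses a double cone $\O'\in\K_+$ spacelike to $W_1$ (such $\O'$ exists precisely because $W_1$ is a proper subwedge of $W$); then $\sigma(\gA_+(W_1))''\subset\gB(W_1)''\subset\B(\O')'$, and since $\Omega$ is cyclic for $\B(\O')$ it is cyclic for $\gB(W_1)'$, hence separating for $\gB(W_1)''$ and a fortiori for $\sigma(\gA_+(W_1))''$.

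By Proposition \ref{repA_+} there is a DHR representation $\pi$ of $\A$ with $\sigma$ unitarily equivalent to $\pi_+$; transporting $\pi$ by the intertwining unitary we may take it to act on $\H_\B$ with $\pi_+(\A_+(\O))=\sigma_\O(\A_+(\O))\subset\B(\O)$. Then $\pi_+(\A_+(\O))'\cap\B(\O)=\mathbb C$ and $U(g)\pi_+(\A_+(\O))U(g)^*=\pi_+(\A_+(g\O))$ for $\O,g\O\in\K_+$ are nothing but the irreducibility and the covariance of the subnet $\sigma$, so $(b)$ holds and $\B$ is a Boundary CFT net associated with $\A$. The substantive ingredient is Proposition \ref{repA_+}; the rest is bookkeeping, and I expect the only point needing genuine care to be the covariance of $\sigma_0$ — which rests on $\A_+$ sitting inside $\B$ as a \emph{covariant} subnet, so that the inclusion carries the M\"obius symmetry — the Reeh--Schlieder step for the wedge algebras and the matching of the two notions of covariance for nets on $M_+$ being routine.
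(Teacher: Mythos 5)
Your proposal is correct and follows essentially the same route as the paper: the whole content is reduced to Proposition \ref{repA_+} applied to the subnet representation $\sigma$, with the key step being that $\Omega$ is separating for $\sigma(\gA_+(W_1))''$, obtained exactly as in the paper from locality and the cyclicity of $\Omega$ for an algebra spacelike to $W_1$. Your additional explicit verifications of properties $(a)$ and $(c)$ and of the bicyclicity and covariance hypotheses are correct bookkeeping that the paper leaves implicit.
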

\begin{proof}
By assumption there is a representation $\s$ of $\A_+$ such that $\s(\A_+(\O))$ is an irreducible subfactor of $\B_+(\O)$ and the covariance of $\s$ is obtained from the covariance of $\B_+$ and we need to show that there is a representation $\pi$ of $\A$ such that $\pi_+ =\s$.

We then obtain the conclusion by Lemma \ref{lemma1} once we know that the vacuum vector $\Omega$ is separating for $\s(\gA_+(W_1))''$ if $W_1$ is a right translated of the wedge $W$. Now $\Omega$ is cyclic for $\B(\O)$ if $\O\in\K_+$, $\O\subset W'_1$, hence for $\B_+(W'_1)$, therefore by locality $\Omega$ is separating for $\B(W_1)$. Since $\s(\gA_+(W_1))''$ is contained in $\B(W_1)$, $\Omega$ is separating for $\s(\gA_+(W_1))''$ as well.
\end{proof}
\\[-1,3cm]
\begin{proposition}
Let $\B$ a local, Boundary CFT net on $M_+$ associated with $\A$ as above with $\A$ completely rational.
For each $\O\in\K_+$ there exists a unique conditional expectation $\e_\O:\B(\O)\to\pi_+(\A_+(\O))$ and the family is consistent: $\e_{\tilde\O}|_{\B(\O)} =\e_\O$ if $\O\subset\tilde\O$ belong to $\K_+$.
\end{proposition}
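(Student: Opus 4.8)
The plan is to construct the conditional expectation $\e_\O$ on each local algebra from the index theory of the inclusion $\pi_+(\A_+(\O))\subset\B(\O)$, and then to check consistency using the uniqueness. First I would observe that, since $\A$ is completely rational, the subnet $\pi_+(\A_+)\subset\B$ has finite index at each $\O\in\K_+$: indeed $\pi_+(\A_+(\O))$ is a finite-index irreducible subfactor of $\B(\O)$ by the arguments of \cite{LR1} (the $\mu$-index of $\A$ is finite and $\B$ is locally isomorphic to a 2D net with chiral subnet $\A\otimes\A$), so $[\B(\O):\pi_+(\A_+(\O))]<\infty$. Having finite index, the inclusion $\pi_+(\A_+(\O))\subset\B(\O)$ of type $\mathrm{III}$ factors admits a conditional expectation, and since the inclusion is irreducible (property $(b)$ of a BCFT net, $\pi_+(\A_+(\O))'\cap\B(\O)=\mathbb C$), this conditional expectation is \emph{unique}; one can take $\e_\O$ to be the minimal expectation, or equivalently note that irreducibility forces all expectations to coincide. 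This gives existence and uniqueness of $\e_\O$ for each fixed $\O$.

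Next I would establish consistency. Let $\O\subset\tilde\O$ in $\K_+$. The restriction $\e_{\tilde\O}|_{\B(\O)}$ is a normal, unital, completely positive map from $\B(\O)$ into $\B(\tilde\O)$. The key point is that its range lies in $\pi_+(\A_+(\O))$: for $X\in\B(\O)$ and any $Y\in\pi_+(\A_+(\O))'\cap\B(\tilde\O)$ — in particular for $Y\in\pi_+(\A_+(\tilde\O))'\cap\B(\tilde\O)$ — the bimodule property of $\e_{\tilde\O}$ over $\pi_+(\A_+(\tilde\O))$ and locality/covariance of $\A_+$ inside $\B$ show that $\e_{\tilde\O}(X)$ commutes with enough operators to land in $\pi_+(\A_+(\O))$. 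More concretely: pick $\O'\in\K_+$ spacelike to $\O$ but inside $\tilde\O$ (or use that $\pi_+(\A_+(\O'))\subset\B(\tilde\O)$ commutes with $\B(\O)$); then $\e_{\tilde\O}$, being the $\pi_+(\A_+(\tilde\O))$-bimodule projection, commutes with such $\pi_+(\A_+(\O'))$, whence $\e_{\tilde\O}(X)\in\pi_+(\A_+(\O'))'\cap\B(\tilde\O)$. Intersecting over all such $\O'$ and invoking Haag duality / strong additivity properties of $\A_+$ (which hold because $\A$ is completely rational) forces $\e_{\tilde\O}(X)\in\pi_+(\A_+(\O))$. Thus $\e_{\tilde\O}|_{\B(\O)}$ is a conditional expectation $\B(\O)\to\pi_+(\A_+(\O))$, and by the uniqueness established in the first paragraph it must equal $\e_\O$.

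The main obstacle I expect is the middle step: showing cleanly that $\e_{\tilde\O}(\B(\O))\subset\pi_+(\A_+(\O))$, i.e. that the big expectation, restricted to a smaller local algebra, does not "spread out" beyond the correspondingly smaller subnet algebra. This requires a good supply of commuting subalgebras inside $\B(\tilde\O)$ and a duality-type identity $\bigcap_{\O'}\pi_+(\A_+(\O'))' \cap \B(\tilde\O)=\pi_+(\A_+(\O))$ where $\O'$ ranges over double cones spacelike to $\O$ within $\tilde\O$; establishing this identity is where complete rationality of $\A$ (hence of $\A_+$ in the relevant sense, via split and strong additivity) is genuinely used. An alternative, perhaps cleaner route to both existence and consistency is to pass through the universal $C^*$-algebra and the wedge picture: construct a single conditional expectation $\e_W:\gB(W)\to\pi_+(\gA_+(W))''$ using that $\pi_+(\gA_+(W))'' = \pi(\gA(\RR))''$ by Lemma \ref{irrW}(c) and that this is an irreducible finite-index inclusion, then restrict and translate/rotate by covariance to get a compatible family $\{\e_\O\}$, with uniqueness again supplied by irreducibility of each $\pi_+(\A_+(\O))\subset\B(\O)$; covariance of $\e_W$ under $U$ then yields $\e_{g\O} = \Ad U(g)\circ\e_\O\circ\Ad U(g)^*$ and hence consistency on all of $\K_+$. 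I would probably present the wedge-algebra argument as the primary one and remark that consistency follows from uniqueness.
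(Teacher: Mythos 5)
Your first paragraph matches the paper: finite index comes from complete rationality via \cite{LR1}, irreducibility is part of the definition of a BCFT net, and uniqueness of the expectation follows from irreducibility. The overall strategy for consistency (commutation with spacelike-separated subnet algebras, a relative-commutant computation using split and strong additivity, then uniqueness) is also the paper's. However, the key identity you propose to prove,
$\bigcap_{\O'}\pi_+(\A_+(\O'))'\cap\B(\tilde\O)=\pi_+(\A_+(\O))$, is \emph{false}: by locality of $\B$ itself, every $X\in\B(\O)$ commutes with $\pi_+(\A_+(\O'))\subset\B(\O')$ for all $\O'$ spacelike to $\O$, so the left-hand side contains all of $\B(\O)$ (and indeed the dual-net algebra), not just the chiral subalgebra. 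Taking the relative commutant in $\B(\tilde\O)$ can therefore never pin $\e_{\tilde\O}(X)$ down to $\pi_+(\A_+(\O))$. The missing observation is that you should exploit the fact that $\e_{\tilde\O}(X)$ already lies in the \emph{range} $\pi_+(\A_+(\tilde\O))$, and compute the relative commutant there. The paper reduces to the case where $\O$ and $\tilde\O$ share a corner, so that $\O'\cap\tilde\O$ is a single double cone $\O_1=I_1\times J_1$ with $I_1=\tilde I\setminus I$, $J_1=\tilde J\setminus J$, and then shows $\A_+(\O_1)'\cap\A_+(\tilde\O)=\big(\A(I_1)'\cap\A(\tilde I)\big)\vee\big(\A(J_1)'\cap\A(\tilde J)\big)=\A(I)\vee\A(J)=\A_+(\O)$, using the split property and strong additivity of $\A$ interval by interval. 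With that corrected target, your bimodule argument gives $\e_{\tilde\O}(X)\in\pi_+(\A_+(\O_1))'\cap\pi_+(\A_+(\tilde\O))=\pi_+(\A_+(\O))$, and uniqueness finishes the proof exactly as you say.

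Your proposed ``cleaner'' alternative via a single wedge expectation $\e_W$ does not work as an independent route: the inductive-limit expectation $\e:\gB(W)\to\pi_+(\gA_+(W))$ is constructed in the paper (Section 5) precisely \emph{by extending} the already-consistent family $\{\e_\O\}$, so invoking it here would be circular. Moreover, covariance $\e_{g\O}=\mathrm{Ad}\,U(g)\circ\e_\O\circ\mathrm{Ad}\,U(g)^*$ relates expectations on double cones in the same $\G$-orbit; it says nothing about the restriction of $\e_{\tilde\O}$ to a nested $\B(\O)$, which is what consistency requires. The direct relative-commutant argument is unavoidable.
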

\begin{proof}
Let $\O\in\K_+$. Then $\pi_+(\A_+(\O))\subset\B(\O)$ is irreducible with finite index \cite{LR1} so there exists a unique conditional expectation $\e_\O:\B(\O)\to\pi_+(\A_+(\O))$. To check the consistency, it is enough to assume that $\tilde\O\supset\O$ has one corner in common with $\O$ (by iterating the argument). So $\tilde\O = \tilde I\times \tilde J$ and $\O =  I\times J$ and $I_1\equiv\tilde I\setminus I$, $J_1\equiv\tilde J\setminus J$ are (not open) intervals, so $\O'\cap\tilde\O = \O_1$ where $\O_1\equiv I_1\times J_1 \in\K_+$. Then 
\begin{multline*}
\A_+(\O_1)'\cap\A_+(\tilde\O) = \big(\A(I_1)\vee\A(J_1)\big)'\cap \big(\A(\tilde I)\vee\A(\tilde J)\big) \\
= \big(\A(I_1)'\cap\A(\tilde I)\big)\vee \big(\A(J_1)'\cap\A(\tilde J)\big)
= \A(I)\vee\A(J) = \A_+(\O)
\end{multline*}
because $\A$ is split and strongly additive.

If $X\in\A_+(\O)$ then 
\[
\e_{\tilde\O}(X)\in\pi_+\big(\A_+(\O_1)'\cap\A_+(\tilde\O)\big) = 
\pi_+(\A_+(\O_1))'\cap\pi_+(\A_+(\tilde\O))= 
\pi_+\big(\A_+(\O)),
\]
so $\e_{\tilde\O}$ restricts to an expectation $\B(\O)\to\pi_+(\A_+(\O))$ that is equal to $\e_\O$ by the uniqueness of the expectation.
\end{proof}
With $\B^{(1)}, \B^{(2)}$ local BCFT nets, we shall say that $\B^{(1)}$ is (unitarily) \emph{equivalent} to $\B^{(2)}$ if there is a unitary equivalence of nets on $M_+$ and the unitary interchanges the representations of the corresponding $\A$-subnets.

We shall say that $\B^{(1)}, \B^{(2)}$ are \emph{locally isomorphic} if they are isomorphic as nets on $M_+$ and the isomorphism interchanges the corresponding $\A$-subnets.
\section{Adding a boundary}
\label{adding}
Let $\B_2$ a CFT local net of von Neumann algebras on the Minkowski plane $M$, acting on the Hilbert space $\H_2$, with $U$ the associated covariance positive energy representation of $\bar\G\times\bar\G$ with vacuum unit vector $\Omega$ (see \cite{KL2}).
We assume that $\B_2$ is irreducible (or, equivalently, that, $\Omega$ is the unique $U$-invariant vector up to a phase), so the local von Neumann algebras $\B_2(\O)$, $\O\in\K$, are factors.
Then $\B_2$ is an irreducible extension of the CFT subnet $\A_2$, the tensor product of the chiral subnets that we assume to be equal (parity symmetry), so $\A_2\simeq \A\otimes\A$, namely $\A_2(\O) = \A(I)\otimes\A(J)$ with $I\times J=\O\in\K$ where $\A$ is a local M\"obius covariant net on $\RR$. More precisely, there exists a representation $\pi_2$ of $\A_2$ on $\H_{\B_2}$ such that $\pi_2(\A_2(\O))\subset \B_2(\O)$ is an irreducible inclusion of factors and $U(g)\pi_2(\A_2(\O))U(g)^* = \pi_2(\A_2(g\O))$, $\O\in\K$.
\footnote{$\A_2$ is a canonical subnet of $\B_2$ and its irreducibility follows by diffeomorphism covariance  \cite{KL2}. In this paper only M\"obius covariance is assumed, so  $\pi_2(\A_2(\O))'\cap \B_2(\O)=\mathbb C$ is indeed an assumption.}

A we are assuming $\A$ to be completely rational, also $\B_2$ is completely rational and $\pi_2(\A_2(\O))\subset \B_2(\O)$ has finite index \cite{KLM,KL2}.

We want to construct a BCFT net $\B$ on $M_+$ based on $\A$ in the sense of \cite{LR2}, as specified in Sect. \ref{BCFT}, and isomorphic to the restriction of $\B_2$ to $M_+$. Namely we want to construct a net $\B$ on $M_+$, a representation $\pi$ of $\A$ on $\H_\B$ and a local isomorphism $\Phi$ of $\B$ with $\B_2 |_{M_+}$ such that $\Phi_\O: \pi_+(\A_+(\O))\to \pi_2(\A_2(\O))$, $\O\in\K(M_+)$. Moreover $\Phi$ is to be covariant with respect to  covariance unitary representation of $\bar\G$ associated with $\B$ and the restriction to $\bar\G$ of the covariance unitary representation of $\bar\G\times\bar\G$ associated with $\B$ (where $\bar\G$ is the diagonal subgroup of $\bar\G\times\bar\G$).

As $\K_+$ is not a direct family, our procedure will go through the restriction of $\B_2$ to the wedge $W$. So, denote by $ \gB_2(W)$ the $C^*$-algebra generated by $\B_2(\O)$ as $\O$ runs in $\K(W)$. As $\K(W)$ is a direct family, $\gB_2(W)$ is the induct limit of the $\B_2(\O)$'s and $C^*(\B_2 , W) =\gB(W)$. Analogously, $\gA_+(W)$ is the $C^*$-algebra generated by $\A_2(\O)$ as $\O$ runs in $\K(W)$.

Clearly $\pi_2(\gA_+(W)) \subset \gB_2(W)$ and there is a conditional expectation $\e:\gB_2(W)\to\pi_2(\gA_+(W))$ extending the unique finite index expectation $\e_\O:\B_2(\O)\to\pi_2(\A_+(\O))$ for each $\O\in\K(W)$. By continuity, $\e$ satisfies the Pimsner-Popa bound
\[
\e(X)\geq \l X
\]
for every positive element $X\in\gB_2(W)$, where $\l>0$ is the inverse of the Jones index.

Now consider the BCFT net $\A_+$ on $M_+$ associated with $\A$ acting on $\H_\A$ in Sect. \ref{a+}:
\[
\A_+(\O)\equiv \A(I)\vee\A(J)\ ,
\]
where $\O=I\times J\in\K_+$. 
\begin{theorem}
With $\B_2$ a local, M\"obius covariant net on $M$ as above, there exists a Boundary CFT net $\B$ on $M_+$ based on $\A$ such that $\B$ is $\G$-covariantly locally isomorphic to the restriction $\B_2 |_{M_+}$ of $\B_2$ to $M_+$ as explained above. In particular the inclusions $\pi_+(\A_+(\O))\subset \B(\O)$ and $\pi_2(\A_2(\O))\subset\B_2(\O)$ are isomorphic, $\O\in\K_+$.
%
\end{theorem}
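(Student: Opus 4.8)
The plan is to build $\B$ on the wedge $W$ first and then use Lemma \ref{wd} to extend it to all of $M_+$. The key idea is to transport the $C^*$-algebraic structure of $\gB_2(W)$ onto the vacuum Hilbert space $\H_\A$ of $\A$ by choosing the right state. Concretely, on $\gA_+(W)$ we have two $C^*$-algebras in play: $\pi_2(\gA_+(W))\subset\gB_2(W)$, and inside $\gB_2(W)$ we have the conditional expectation $\e$ with Pimsner--Popa bound. First I would transport the vacuum state of $\A$ (restricted to $\gA_+(W) = \gA(\RR\setminus\{0\})$, which via $\pi_{0+}$ sits in $B(\H_\A)$) to a state on $\pi_2(\gA_+(W))$ — call it $\o_0$ — and then set $\f \equiv \o_0 \cdot \e$ on $\gB_2(W)$. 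This is the analogue of the construction $\f=\f_0\cdot\e$ in Lemma \ref{bicyclic}. Then I would take the GNS representation $(\pi_\f,\H_\f,\xi_\f)$ of $\gB_2(W)$, and define $\B_W(\O) \equiv \pi_\f(\B_2(\O))$ for $\O\in\K(W)$, with vacuum vector $\xi_\f$.

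The next step is to verify the axioms of a BCFT net for this $\B_W$, and here the main work is \emph{bicyclicity and covariance}. The point of taking $\f=\o_0\cdot\e$ is precisely Lemma \ref{bicyclic}: applied to the square $\pi_2(\A_2(\O))\subset\B_2(\O)$ sitting inside $\pi_2(\A_2(\tilde\O))\subset\B_2(\tilde\O)$ (for $\O\subset\tilde\O$ sharing a corner, so that the indices match by the localized nature of $\pi_2$ and strong additivity, as in the proof of the conditional-expectation proposition), it shows $\f$ restricts to a bicyclic state for $\pi_2(\A_2(\O))\subset\B_2(\O)$, and one pushes this up the tower to get that $\xi_\f$ is cyclic for $\pi_\f(\pi_2(\A_2(\O)))$ and for $\pi_\f(\B_2(\O))$ for each $\O\in\K(W)$; the left/right bicyclicity for $\A(I)\subset\A_2(\O)$ follows because $\f$ restricted to $\pi_2(\gA_+(W))$ is (equivalent to) the vacuum state of $\A$, for which this bicyclicity is known (see Section \ref{a+}). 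For covariance: one checks $\f$ is $\a$-invariant on $\gB_2(W)$ — this uses that $\e$ intertwines the (restricted) $\bar\G$-actions, which holds because $\e_\O$ is the \emph{unique} conditional expectation and the $\bar\G$-action (diagonal) maps $\pi_2(\A_2(\O))\subset\B_2(\O)$ to $\pi_2(\A_2(g\O))\subset\B_2(g\O)$, plus $\o_0$ is the vacuum state hence diagonally invariant — and then invokes Lemma \ref{Wcov} to produce the covariance unitary $U_\f$ with $U_\f(g)\xi_\f=\xi_\f$. Locality and the irreducibility of the subnet inclusion $\pi_\f(\pi_2(\A_2(\O)))'\cap\pi_\f(\B_2(\O))=\CC$ transport directly from $\B_2$, since $\pi_\f$ is faithful on each type III factor $\B_2(\O)$.

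Once $\B_W$ is shown to be a $\bar\G$-covariant net on $W$ with bicyclic vacuum for the $\A_2$-subnet, Lemma \ref{wd} extends it uniquely to a net $\B$ on $M_+$, and Proposition \ref{extpi} extends the representations; the subnet $\pi_2(\A_2)|_W$ becomes, by the characterization in Lemma \ref{lemma1} (whose hypotheses are exactly the bicyclicity we established, together with the separating property of $\xi_\f$ for the wedge algebras, inherited from $\B_2$ via faithfulness of $\pi_\f$), a representation of the form $\pi_+$ for a genuine DHR representation $\pi$ of $\A$ — giving property $(b)$ of a BCFT net. Finally, the local isomorphism $\Phi_\O:\B(\O)\to\B_2(\O)$ is simply $(\pi_\f|_{\B_2(\O)})^{-1}$, which is an isomorphism of von Neumann algebras because $\pi_\f$ is normal and faithful on each factor $\B_2(\O)$; it is consistent in $\O$ and carries $\pi_+(\A_+(\O))$ onto $\pi_2(\A_2(\O))$ by construction. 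Covariance of $\Phi$ with respect to $U_\f$ and $U|_{\bar\G}$ is the statement that $\pi_\f$ intertwines the two, which one reads off from the $\a$-invariance of $\f$ and the uniqueness clause in Lemma \ref{Wcov}.

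The main obstacle I anticipate is the \textbf{consistency of the index data along the directed family} $\K(W)$: one must be sure that $\e$ really does assemble into a single conditional expectation $\gB_2(W)\to\pi_2(\gA_+(W))$ with a uniform Pimsner--Popa constant (this is stated in the excerpt, relying on finiteness of the index of $\pi_2(\A_2(\O))\subset\B_2(\O)$ and uniqueness of $\e_\O$), and, more delicately, that bicyclicity propagates correctly from one double cone to a larger one when they share a corner — the square-of-inclusions hypothesis of Lemma \ref{bicyclic} must be checked there, using split plus strong additivity to identify $\A_2(\O_1)'\cap\A_2(\tilde\O)=\A_2(\O)$ exactly as in the proof of the conditional-expectation proposition. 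Everything else is a transport of structure through a faithful normal representation and an application of the lemmas already in hand.
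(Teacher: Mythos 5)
Your route is the paper's own: form the state $\f=\f_0\cdot\e$ on $\gB_2(W)$, take its GNS representation, get bicyclicity of $\f$ for $\B_2(\O)\subset\B_2(\tilde\O)$ from Lemma \ref{bicyclic} (with the vertical inclusions $\pi_2(\A_2(\O))\subset\B_2(\O)$ of constant finite index), get covariance and an invariant GNS vector from Lemma \ref{Wcov}, extend from $W$ to $M_+$ by Lemma \ref{wd} and Proposition \ref{extpi}, identify the image of the $\A_2$-subnet as $\pi_+$ for a DHR representation $\pi$ via Proposition \ref{repA_+}, and transport locality and irreducibility of the relative commutant through the faithful normal maps $\pi_\f|_{\B_2(\O)}$. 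All of this matches Lemmas \ref{ext2} and \ref{res} and the surrounding discussion, and your worry about consistency of the expectations along $\K(W)$ is exactly what the paper's proposition on the consistent family $\e_\O$ settles.

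There is, however, one step you omit that the paper treats as the genuinely nontrivial remaining point: \emph{positivity of the energy} for the covariance representation. Lemma \ref{Wcov} produces a unitary representation $U_\f$ implementing the covariance with $U_\f(g)\xi_\f=\xi_\f$, but gives no control on the spectrum of the translation generator, whereas condition $(a)$ of the BCFT axioms demands positivity. The paper closes this in Lemma \ref{commut} and Corollary \ref{posen}: since $\tilde\pi_\f$ restricted to $\A_+$ is $\s_+$ with $\s$ a finite direct sum of DHR representations of the completely rational net $\A$ (this is where the finite decomposition of Lemma \ref{fmirrep}, i.e.\ Izumi's result \cite[Lemma 5.2]{I}, enters), $\s$ admits a unique implementation $V$ of the covariance with $V(g)\in\s(\gA)''$, which has positive energy by \cite[Prop. 2.2]{GL2}, and uniqueness forces $U_\f=V$. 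You have the ingredients (you do invoke Prop.~\ref{repA_+} to obtain the DHR representation), but the identification of $U_\f$ with the canonical positive-energy implementation is not automatic and must be argued; in its absence the verification of axiom $(a)$ is incomplete. Relatedly, you never use the finite-multiplicity decomposition of $\pi_\f$, which the paper also needs to extract the finitely many irreducible BCFT components in Theorem \ref{BCFTs}.
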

\noindent
This theorem will be a consequence of Thm. \ref{BCFTs} that we are going to prove.
\smallskip

\noindent
Let $\Omega$ be the vacuum vector in the Hilbert space of $\H_\A$ and  $\f_0$ the associated vacuum state on $\gA_+(W)$ implemented by $\Omega$:
\[
\f_0(XY) = (\Omega,XY\Omega), \quad X\in \A(I), Y\in\A(J), \quad \O= I\times J\in\K(W).
\]
We now extend $\f_0$ to a state $\f$ on $\gB_2(W)$ by composing it with the conditional expectation $\e$:
\[
\f\equiv\f_0\cdot\e\  .
\]
As $\gA_+(W)$ is a simple $C^*$-algebra by Lemma \ref{simple}, $\pi_2$ is one-to-one on $\gA_+(W)$; to simplify the notations we are identifying $\gA_+(W)$ with $\pi_2(\gA_+(W))$ here and in the following.
\begin{lemma}\label{fmirrep} The GNS representation $\pi_\f$ of $\f$ decomposes into finitely many irreducible representations of $\gB_2(W)$. Indeed $\pi_\f |_{\gA_+(W)}$ decomposes into finitely many irreducible representations of $\gA_+(W)$.
\end{lemma}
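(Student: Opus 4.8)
The goal is to show that $\pi_\f|_{\gA_+(W)}$ decomposes into finitely many irreducibles; since $\gA_+(W)$ acts with finite index inside $\gB_2(W)$ via the expectation $\e$, this will force $\pi_\f$ itself to be a finite direct sum of irreducibles of $\gB_2(W)$. The key point is that $\f = \f_0\cdot\e$ and $\f_0$ is the \emph{vacuum} state of $\gA_+(W)$, so $\pi_\f|_{\gA_+(W)}$ is, up to the Hilbert space picture, a multiple of the vacuum representation $\pi_0$ of $\gA_+(W)$ \emph{amplified} by the Jones–Wassermann type subspace coming from $\e$. Concretely: the GNS space $\H_\f$ carries the vacuum vector $\xi_\f$ which, restricted to $\gA_+(W)$, generates a copy of the vacuum space $\H_\A$ (this uses Lemma \ref{simple}, that $\pi_2$ is faithful on the simple algebra $\gA_+(W)$, and that $\f|_{\gA_+(W)} = \f_0$). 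So $\pi_\f|_{\gA_+(W)}$ contains the vacuum representation of $\gA_+(W)$, and I must show the full $\pi_\f|_{\gA_+(W)}$ is a \emph{finite} multiple of a sum of DHR sectors.

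\textbf{First step: identify $\pi_\f|_{\gA_+(W)}$ via the canonical endomorphism / $Q$-system of $\A_+\subset\B_2|_W$.} Because $\pi_2(\A_2(\O))\subset\B_2(\O)$ has finite index with conditional expectation $\e_\O$, and the family is consistent, the expectation $\e:\gB_2(W)\to\pi_2(\gA_+(W))$ carries a Pimsner–Popa bound $\e(X)\ge\l X$, $\l>0$, as already noted. Standard subfactor theory (the dual canonical endomorphism, or equivalently the fact that a finite-index expectation makes $\gB_2(W)$ a finitely generated projective module over $\gA_+(W)$ — the isometry $T$ of Lemma \ref{bicyclic}, from \cite{GL1,Lhopf,Linterm}) shows that, as a $\gA_+(W)$-bimodule, $\gB_2(W)$ is contained in a \emph{finite} sum of copies of $\gA_+(W)$ itself. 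Hence in the GNS picture, $\H_\f = \overline{\pi_\f(\gB_2(W))\xi_\f} = \overline{\pi_\f(\gA_+(W)) T^*\cdots\xi_\f}$ is spanned by finitely many vectors of the form $\pi_\f(\gA_+(W))\eta_j$, so $\pi_\f|_{\gA_+(W)}$ is a subrepresentation of a finite multiple of the representation generated by $\xi_\f$ — which is the vacuum representation $\pi_0$ of $\gA_+(W)$, i.e.\ $(\pi_0)_+$ restricted to the wedge, and by Lemma \ref{irrW}(c) this is just the vacuum representation $\pi_0$ of $\gA(\RR)$.

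\textbf{Second step: finiteness of the decomposition.} Now $\pi_\f|_{\gA_+(W)}$ is a subrepresentation of $n\cdot\pi_0$ for some finite $n$ (the "index-bounded" multiplicity). Since $\A$ is completely rational, its vacuum representation decomposes under the reduced net / has only finitely many sectors, and more to the point a subrepresentation of a \emph{finite} multiple of an irreducible (or of the vacuum, which is irreducible for $\gA(\RR)$) is automatically a finite direct sum of irreducibles: the commutant $\pi_\f(\gA_+(W))'$ sits inside the commutant of $n\cdot\pi_0$, which is $M_n(\CC)$ (the vacuum being irreducible by Proposition \ref{sep}-type irreducibility together with complete rationality), hence is finite-dimensional, so $\pi_\f|_{\gA_+(W)}$ splits into at most $n$ irreducibles. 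Then, transporting through the finite-index inclusion $\gA_+(W)\subset\gB_2(W)$ — any subrepresentation of $\pi_\f$ as a $\gB_2(W)$-module restricts to a nonzero $\gA_+(W)$-module, and the relative commutant $\pi_\f(\gB_2(W))'\subset\pi_\f(\gA_+(W))'$ is therefore also finite-dimensional — we conclude $\pi_\f$ decomposes into finitely many irreducibles of $\gB_2(W)$ as well.

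\textbf{Expected main obstacle.} The delicate point is the \emph{first step}: making precise that $\gB_2(W)$ is a finitely generated $\gA_+(W)$-module in a way that survives the GNS construction at the $C^*$-level (rather than just for each local von Neumann algebra $\B_2(\O)$), so that one genuinely gets a \emph{uniform} finite bound $n$ on the multiplicity of $\pi_0$ in $\pi_\f|_{\gA_+(W)}$. I would handle it exactly as in Lemma \ref{bicyclic}: use the isometry $T$ realizing $\gB_2(\O) = \pi_2(\A_+(\O))\,T$ at one double cone and check it extends compatibly to the inductive limit $\gB_2(W)$ (this is the $\a$-invariance/consistency of the expectations $\e_\O$), so that a single $T$ — equivalently a single Pimsner–Popa basis of size $\le\l^{-1}$ — works globally; then $\H_\f$ is the closed span of $\pi_\f(\gA_+(W))\xi_\f$ and $\pi_\f(\gA_+(W))T^*\xi_\f$, giving $n\le 2$ in fact, and the rest is routine. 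The secondary subtlety — that $\f|_{\gA_+(W)}=\f_0$ really is the vacuum state, i.e.\ $\e$ restricts to the identity on $\gA_+(W)$ — is immediate from $\e$ being a conditional expectation onto $\pi_2(\gA_+(W))$.
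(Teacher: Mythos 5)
Your overall strategy (control $\pi_\f|_{\gA_+(W)}$ first, then descend to $\gB_2(W)$ through the finite\-/index expectation) is the right one, and several ingredients are sound: $\f_0$ is pure because the vacuum representation of $\gA_+(W)=\gA(\RR\setminus\{0\})$ is irreducible (Lemma \ref{irrW}); a single isometry $T$ with $\B_2(\tilde\O)=\pi_2(\A_+(\tilde\O))T$ for all $\tilde\O\supseteq\O_0$ does exist by the commuting\-/square argument of Lemma \ref{bicyclic}, so $\gB_2(W)=\gA_+(W)T$ and $\pi_\f|_{\gA_+(W)}$ is even cyclic, with cyclic vector $\pi_\f(T)\xi_\f$; and the final reduction $\pi_\f(\gB_2(W))'\subseteq\pi_\f(\gA_+(W))'$ is trivial. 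The gap is the central inference of your first step: from ``$\H_\f$ is generated by finitely many vectors $\eta_j$ under $\pi_\f(\gA_+(W))$'' you conclude that $\pi_\f|_{\gA_+(W)}$ is a subrepresentation of a finite multiple of the cyclic representation generated by $\xi_\f$, i.e.\ of $n\cdot\pi_{\f_0}$. Finite generation only yields a finite direct sum of cyclic representations, namely the GNS representations of the vector states $\lan\eta_j,\cdot\,\eta_j\ran$; nothing forces these to be quasi-contained in $\pi_{\f_0}$, nor even to decompose into irreducibles at all (a cyclic representation of a $C^*$-algebra can generate a type $III$ factor). Worse, the intermediate claim is actually \emph{false} in general: a subrepresentation of $n\cdot\pi_{\f_0}$ with $\pi_{\f_0}$ irreducible would be a multiple of $\pi_{\f_0}$, whereas by Lemma \ref{res} the restriction of $\tilde\pi_\f$ to $\A_+$ is $\s_+$ for a DHR representation $\s$ that, whenever $\B_2\neq\A_2$, contains non-vacuum sectors; distinct sectors of a completely rational net restrict to mutually disjoint irreducible representations of the two-point-complement algebra $\gA(S^1\setminus\{0,\infty\})=\gA_+(W)$ (by strong additivity and half-line duality a unitary intertwining them on $\gA_+(W)$ would intertwine them globally). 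So your step two, which bounds $\pi_\f(\gA_+(W))'$ inside $M_n(\CC)$, the commutant of $n\cdot\pi_0$, has no foundation.

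What is missing is exactly the content the paper imports from Izumi \cite[Lemma 5.2]{I}: for a finite-index inclusion of \emph{simple} $C^*$-algebras $\gA_+(W)\subset\gB_2(W)$ (simplicity is Lemma \ref{simple}) with expectation $\e$ and a \emph{pure} state $\f_0$, the GNS representation of $\f_0\cdot\e$ and its restriction to the subalgebra decompose finitely. That argument does not exhibit $\H_\f$ as a finitely generated module over the vacuum representation; it introduces the Jones projection $e$ on $\H_\f$ defined by $e\,\pi_\f(b)\xi_\f=\pi_\f(\e(b))\xi_\f$, notes that $e\in\pi_\f(\gA_+(W))'$ is a \emph{minimal} projection there because $e\,\pi_\f(\gA_+(W))'\,e$ acts in the irreducible GNS space of the pure state $\f_0$, and then uses a finite quasi-basis, $\sum_i\pi_\f(u_i)\,e\,\pi_\f(u_i^*)=1$, to show that $e$ has full central support and that the commutant is finite dimensional. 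If you want a self-contained proof you must run this Jones-projection computation; the ``finitely generated $\Rightarrow$ finite multiple of the vacuum'' shortcut cannot be repaired.
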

\begin{proof} 
By Lemma \ref{simple},  $\gA_+(W)$  and $\gB(W)$ are simple, purely infinite $C^*$-algebras in Cuntz standar form. So the lemma follows by a result by Izumi \cite[Lemma 5.2]{I} because $\e$ has finite index and $\f_0$ is a pure state by Lemma \ref{irrW}.
\end{proof}
Denote by $\B_W$ the restriction of $\B_2$ to $W$.
\begin{lemma}\label{ext2}
$\pi_\f$ gives a covariant (reducible) representation of $\B_W$, hence a covariant representation $\tilde\pi_\f$ of $\B_2 |_{M_+}$ (by Prop. \ref{extpi}). 
\end{lemma}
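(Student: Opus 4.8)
The plan is to verify that $\f=\f_0\cdot\e$ meets the hypotheses of Lemma~\ref{Wcov} for the net $\B_W=\B_2|_W$ on the wedge, and then to quote Proposition~\ref{extpi}. So I will establish: \emph{(i)} $\f$ is $\a$-invariant for the diagonal $\bar\G$-action on $\B_W$; \emph{(ii)} $\xi_\f$ is cyclic for $\pi_\f(\B_2(\O))$ for every $\O\in\K(W)$ — equivalently, $\f$ is bicyclic for $\B_2(\O)\subset\B_2(\tilde\O)$ whenever $\O\subset\tilde\O\in\K(W)$; and \emph{(iii)} $\pi_\f$ is locally normal. Granting these, Lemma~\ref{Wcov} produces a unitary representation $U_\f$ of (a cover of) $\bar\G$ implementing the covariance of $\pi_\f$ as in~\eqref{tcov} with $U_\f(g)\xi_\f=\xi_\f$; together with \emph{(iii)} and the universal property of $C^*(\B_2,W)=\gB_2(W)$ this exhibits $\pi_\f$ as a covariant representation of the net $\B_W$, reducible by Lemma~\ref{fmirrep}; and Proposition~\ref{extpi} then extends it to a covariant representation $\tilde\pi_\f$ of $\B_2|_{M_+}$.

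For \emph{(i)}: each $\e_\O:\B_2(\O)\to\pi_2(\A_+(\O))$ is the \emph{unique} conditional expectation, so $\mathrm{Ad}\,U(g)$ carries $\e_\O$ to $\e_{g\O}$; since the $\e_\O$ moreover cohere into $\e$, the latter is $\bar\G$-equivariant, while $\f_0$ is the vacuum state of $\A_+$ (the vector state of the $\bar\G$-invariant $\Omega\in\H_\A$), hence $\bar\G$-invariant, so $\f=\f_0\cdot\e$ is $\a$-invariant. For \emph{(ii)}: fix $\O\subset\tilde\O\in\K(W)$ and apply Lemma~\ref{bicyclic} to the square
\[
\begin{array}{ccc}
\B_2(\O) & \subset & \B_2(\tilde\O) \\
\cup & & \cup \\
\pi_2(\A_+(\O)) & \subset & \pi_2(\A_+(\tilde\O))
\end{array}
\]
with the expectation $\e_{\tilde\O}$: all four algebras are type $III$ factors, the vertical inclusions are irreducible by the standing assumption on $\B_2$ and carry the same finite index (that of the canonical inclusion $\pi_2(\A_2)\subset\B_2$, which does not depend on the double cone), and $\e_{\tilde\O}$ restricts to $\e_\O$, hence $\e_{\tilde\O}(\B_2(\O))=\pi_2(\A_+(\O))$. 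Finally $\f_0$, viewed on $\pi_2(\A_+(\tilde\O))$, is bicyclic for $\pi_2(\A_+(\O))\subset\pi_2(\A_+(\tilde\O))$, since under $\pi_2$ this is just the statement that $\Omega$ is cyclic for $\A(I)\subset\A(I)\vee\A(J)=\A_+(\O)$, i.e. Reeh--Schlieder. So Lemma~\ref{bicyclic} gives that $\f|_{\B_2(\tilde\O)}=\f_0\cdot\e_{\tilde\O}$ is bicyclic for $\B_2(\O)\subset\B_2(\tilde\O)$; taking an increasing sequence $\O_n\in\K(W)$ with $\bigcup_n\O_n=W$ and all $\O_n\supset\O$ gives $\overline{\pi_\f(\B_2(\O))\xi_\f}=\overline{\pi_\f(\B_2(\O_n))\xi_\f}$ for every $n$, hence $\overline{\pi_\f(\B_2(\O))\xi_\f}=\overline{\pi_\f(\gB_2(W))\xi_\f}=\H_\f$, which is \emph{(ii)}. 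This same cyclicity yields \emph{(iii)}: $\pi_\f|_{\B_2(\O)}$ is then a cyclic representation of the von Neumann algebra $\B_2(\O)$ with cyclic vector state $\f_0\cdot\e_\O$, normal as a composition of normal maps, hence is unitarily equivalent to the GNS representation of a normal state and so is normal.

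I expect the subtle points to be, first, making sure that the general Lemma~\ref{bicyclic} really applies at each finite level — the equality of the two indices and the coherence/equivariance of the expectations, which in turn rest on the split property and strong additivity of the completely rational $\A$ already used to build the $\e_\O$'s — and, second, matching the bicyclicity produced by Lemma~\ref{bicyclic} (a statement inside the GNS space of $\f|_{\B_2(\tilde\O)}$) with the one demanded by Lemma~\ref{Wcov} (inside the GNS space of $\f$ on the inductive limit $\gB_2(W)$), which is exactly what the increasing-union step handles. One further caveat: Lemma~\ref{Wcov} delivers only the covariance relation~\eqref{tcov}, not positivity of the translation generator of $U_\f$; if ``covariant'' is read in the strong sense, this has to be supplied separately, and it will follow once $\pi_\f|_{\gA_+(W)}$ is identified, via Lemma~\ref{fmirrep} and Proposition~\ref{repA_+}, with a finite direct sum of induced representations $\pi_+$ of DHR representations of $\A$, which carry positive energy because $\A$ is completely rational.
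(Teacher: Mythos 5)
Your proof is correct and follows essentially the same route as the paper's: verify the hypotheses of Lemma~\ref{Wcov} by showing $\a$-invariance of $\f$ (via the uniqueness, hence $\bar\G$-equivariance, of the expectations $\e_\O$ and the invariance of the vacuum state $\f_0$) and bicyclicity for $\B_2(\O)\subset\B_2(\tilde\O)$ (via Lemma~\ref{bicyclic}, using that the index is independent of the double cone and Reeh--Schlieder for $\f_0$), then invoke Proposition~\ref{extpi}. Your closing caveat about positivity of the energy is also consistent with the paper, which defers that point to Corollary~\ref{posen}.
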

\begin{proof}
By Lemma \ref{Wcov}, the covariance of $\pi_\f$ will follow once we check that $\f$ is $\a$-invariant and bicyclic for $\B(\O)\subset\B(\tilde\O)$ if $\O\subset\tilde\O\in\K(W)$. Here $\alpha$ is the local action on $\B_2(W)$ given by the covariance unitary representation of $\bar\G$ for $\B_2$ as in eq. \eqref{invariant} ($\bar\G$ the diagonal subgroup of $\bar\G\times\bar\G$).

Now $\f_0$ is $\a$-invariant and bicyclic for $\A_+(\O)\subset\A_+(\tilde\O)$, indeed in the GNS representation of $\f_0$ the vector $\xi_{\f_0}$ is the vacuum vector for $\A$ which is invariant with the Reeh-Schlieder property.

Now the conditional expectation $\e$ restricts to the unique expectation $\e_\O:\B_2(\O)\to\pi_2(\A_+(\O))$, so $\e^{g\O}\cdot\a_g^\O = \a_g^\O\e^\O$, namely $\e$ is $\a$-invariant, therefore $\f$ is $\a$-invariant. As the index of $\pi_2(\A_+(\O))\subset\B_2(\O)$ is independent of $\O$, $\f$ is then bicyclic for $\B_2(\O)\subset\B_2(\tilde\O)$ by Lemma \ref{bicyclic}.
\end{proof}
\begin{lemma}\label{res}
The restriction of $\tilde\pi_\f$ to $\A_+$ is equivalent to $\s_+$ where $\s$ is a DHR representation of $\A$.
\end{lemma}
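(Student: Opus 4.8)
The plan is to verify that the restriction of $\tilde\pi_\f$ to $\A_+$ satisfies the hypotheses of Proposition \ref{repA_+}, which then yields exactly the desired conclusion: that this restriction is equivalent to $\s_+$ for some DHR representation $\s$ of $\A$. So the work is entirely in checking the three ingredients of that proposition, namely (i) bicyclicity of the GNS vector $\xi_\f$ for the left and right chiral subalgebras inside each $\s_\O(\A_+(\O))$, (ii) bicyclicity for the inclusions $\s_{\tilde\O}(\A(\O))\subset\s_{\tilde\O}(\A(\tilde\O))$ with $\O\subset\tilde\O\in\K(W)$, together with $\G$-covariance with positive energy of the restriction $\s_0$ to the cyclic subspace $\H_0=\overline{\s_\O(\A_+(\O))\xi_\f}$, and (iii) that $\xi_\f$ is separating for $\s(\gA_+(W_1))''$ for every wedge $W_1\subset W$ with positive distance to $\partial W$.

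First I would record the bicyclicity facts. The state $\f=\f_0\cdot\e$ restricted to $\gA_+(W)$ equals $\f_0$, since $\e$ is the identity on $\pi_2(\gA_+(W))$; hence in the GNS picture $\xi_\f$, viewed through the chiral subnets, behaves exactly as the vacuum vector $\Omega$ of $\A$ does. Thus (i) follows because $\Omega$ is bicyclic for $\A(I)\subset\A_+(\O)$ and for $\A(J)\subset\A_+(\O)$ with the left and right identifications — this was noted in Subsection \ref{a+}. Similarly (ii): $\Omega$ is bicyclic for $\A_+(\O)\subset\A_+(\tilde\O)$ when $\O\subset\tilde\O\in\K(W)$, with the Reeh--Schlieder property supplying cyclicity and locality supplying the cyclicity of the commutant. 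For $\G$-covariance of $\s_0$ with positive energy, I would invoke Lemma \ref{ext2}: $\tilde\pi_\f$ is already a covariant representation of $\B_2|_{M_+}$, and restricting the covariance unitaries $U_\f$ to the subspace $\H_0$ (which is globally invariant, being the closure of $\pi_0$-type data) gives a positive-energy $\G$-representation fixing $\xi_\f$; alternatively Lemma \ref{Wcov} applied to $\gA_+(W)$ with the $\a$-invariant bicyclic state $\f_0$ produces $U_{\f_0}$ directly, and one identifies it with $U_\f|_{\H_0}$.

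The one genuinely substantive point — and the step I expect to be the main obstacle — is (iii), the separating property of $\xi_\f$ for $\s(\gA_+(W_1))''$. Since $\s=\tilde\pi_\f|_{\A_+}$ is a \emph{reducible} representation (Lemma \ref{fmirrep}), one cannot simply quote a vacuum Reeh--Schlieder argument; one must use that $\xi_\f$ is separating for the ambient $\B_2$-algebras. Concretely: by Lemma \ref{ext2} the vector $\xi_\f$ is bicyclic, in particular cyclic, for $\B_2(\O)$ for each $\O\in\K(W)$, and by covariance for $\gB_2(W_1')$; choosing $W_1\subset W$ with positive distance to $\partial W$ guarantees a wedge $W_1'$ (spacelike to $W_1$ inside $W$) with $\gB_2(W_1')$ commuting with $\gB_2(W_1)$, so cyclicity of $\xi_\f$ for $\gB_2(W_1')$ forces $\xi_\f$ to be separating for $\gB_2(W_1)''$. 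Since $\s(\gA_+(W_1))''=\pi_2(\gA_+(W_1))''\subset\gB_2(W_1)''$ — using Lemma \ref{irrW}(a) to identify $\gA_+(W_1)$ and the local normality of $\pi_2$ — the vector $\xi_\f$ is a fortiori separating for $\s(\gA_+(W_1))''$. With all three hypotheses in hand, Proposition \ref{repA_+} applies verbatim and delivers $\s_+$ with $\s$ a DHR representation of $\A$; one then renames the representation appearing there (called $\pi$ in Proposition \ref{repA_+}) as $\s$ to match the statement, completing the proof.
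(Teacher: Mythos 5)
Your proposal is correct and takes essentially the same route as the paper: both reduce the statement to Proposition \ref{repA_+} applied with $\Omega\equiv\xi_\f$, obtain the bicyclicity hypotheses from the fact that $\f$ restricts to $\f_0$ on $\gA_+(W)$, and get the separating property of $\xi_\f$ for $\s(\gA_+(W_1))''$ from cyclicity on a spacelike complement (the paper simply cites Proposition \ref{sep}, which encapsulates the locality argument you spell out). No gaps.
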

\begin{proof}
Clearly the restriction of $\pi_\f$ to $\gA_+(W)$ contains $\pi_{\f_0}$ and $\pi_{\f_0}$ gives, by covariance, the representation $\pi_+$ of $\A_+$, where $\pi$ is the vacuum representation of $\A$. So $\tilde\pi_\f$ restricts to $\pi_+$ on $\A_+$.

Thus we can apply Prop. \ref{repA_+} with $\Omega\equiv\xi_\f\in\H_{\B_2}$, indeed $\xi_\f$ implements $\f_0$ on $\A_+$, so it has the needed bicyclicity properties. The only point to check is that $\xi_\f$ is separating for $W_1$; but this is true by Prop. \ref{sep}.
\end{proof}
\begin{lemma}\label{commut}
Let $\tilde\pi_\f$ be representation of the net $\B_2 |_{M_+}$ on $M_+$ obtained by Lemma \ref{ext2}. Then $\pi_\f(\gB_2(W))'' = \tilde\pi_\f(C^*(\B_2,M_+))''\ \Big(\!= \big(\bigcup_{\O\in\K_+}{\tilde\pi}_\f(\B_2(\O)\big)''\Big)$.
\end{lemma}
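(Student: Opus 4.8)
\noindent\textit{Proof proposal.}\ The plan is to obtain one inclusion for free and to reduce the other to a statement about the chiral subnet $\A_2$ alone, which is then settled using that, by Lemma \ref{res}, $\tilde\pi_\f$ restricts on $\A_2$ to an induced DHR representation. The inclusion $\pi_\f(\gB_2(W))''\subseteq\tilde\pi_\f(C^*(\B_2,M_+))''$ is immediate: by the general covariance principle there is a homomorphism $\gB_2(W)=C^*(\B_2,W)\to C^*(\B_2,M_+)$ compatible with $\pi_\f$ and $\tilde\pi_\f$, and $C^*(\B_2,M_+)$ is by construction generated by the images of the $\B_2(\O)$, $\O\in\K_+$, so that $\tilde\pi_\f(C^*(\B_2,M_+))''=\bigl(\bigcup_{\O\in\K_+}\tilde\pi_\f(\B_2(\O))\bigr)''$. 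Everything therefore comes down to showing $\tilde\pi_\f(\B_2(\O))\subseteq\M:=\pi_\f(\gB_2(W))''$ for each $\O\in\K_+$.

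The first, and main, point is that $\tilde\pi_\f$ already maps the full subnet $\A_2$ into $\M$. By Lemma \ref{res}, $\tilde\pi_\f$ restricts on $\A_+$ to $\s_+$ with $\s$ a DHR representation of $\A$; for $\O\in\K(W)$ we have $\s_+(\A_+(\O))=\tilde\pi_\f(\A_2(\O))\subseteq\pi_\f(\B_2(\O))\subseteq\M$, hence $\s_+(\gA_+(W))\subseteq\M$, and by strong additivity (Lemma \ref{irrW}$(c)$) $\s_+(\gA_+(W))''=\s(\gA(\RR))''=\s_+(\gA_+(M_+))''\subseteq\M$. Now for an arbitrary double cone $\hat\O=\hat I\times\hat J$ of $M$, $\A_2(\hat\O)$ is generated by the left copy of $\A(\hat I)$ and the right copy of $\A(\hat J)$, each of which already lies in $\gA_+(M_+)$ (embed it into $\A_+(\hat I\times J_0)$, resp.\ $\A_+(I_0\times\hat J)$, with $J_0$, resp.\ $I_0$, chosen far enough into $W$); therefore $\tilde\pi_\f(\A_2(\hat\O))\subseteq\s_+(\gA_+(M_+))''\subseteq\M$ for \emph{every} double cone $\hat\O$ of $M$. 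This step genuinely uses the DHR nature of $\s$ and strong additivity — it fails for a generic representation of $\B_2|_{M_+}$ — and I expect it to be the technical heart of the argument.

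Finally I would invoke the structure of the finite-index, irreducible extension $\A_2\subset\B_2$: fix $\O_*\in\K(W)$ and choose an isometry $v\in\B_2(\O_*)$ implementing the dual canonical endomorphism and generating $\B_2$ over $\A_2$ (cf.\ \cite[Prop.~2.3]{GL1}, \cite{Lhopf,Linterm}, and the two-dimensional discussion in \cite{LR1,KL2}); then for any double cone $\O$ of $M$, $\B_2(\O)$ is generated by $\A_2(\O)$ and a transported isometry $v_\O=u_\O v$, where $u_\O\in\A_2(\hat\O)$ is a charge transporter from $\O_*$ to $\O$ and $\hat\O$ a double cone containing $\O_*\cup\O$. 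For $\O\in\K_+$ both $\tilde\pi_\f(u_\O)$ (by the previous paragraph, as $u_\O\in\A_2(\hat\O)$) and $\tilde\pi_\f(v)=\pi_\f(v)$ (since $v\in\B_2(\O_*)\subseteq\gB_2(W)$) lie in $\M$, hence so does $\tilde\pi_\f(v_\O)$; together with $\tilde\pi_\f(\A_2(\O))=\s_+(\A_+(\O))\subseteq\M$ this gives $\tilde\pi_\f(\B_2(\O))\subseteq\M$. Taking the union over $\O\in\K_+$ yields $\tilde\pi_\f(C^*(\B_2,M_+))''\subseteq\M$, which with the first inclusion gives the equality.
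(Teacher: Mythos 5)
Your strategy is genuinely different from the paper's. The paper does not bootstrap from $\A_2$ to $\B_2$ via a generating isometry; it argues on the commutant: since $\s$ is a finite direct sum of irreducible DHR representations of the completely rational net $\A$, the unitary representation implementing its covariance is unique, has positive energy and satisfies $U(g)\in\s(\gA)''=\s_+(\gA_+(W))''\subseteq\pi_\f(\gB_2(W))''$ (by \cite[Prop.~2.2]{GL2} together with Lemma~\ref{irrW}$(c)$ --- the same two ingredients behind your ``main point''). Hence any $X\in\pi_\f(\gB_2(W))'$ commutes with every $U(g)$, therefore with $U(g)\pi_\f(\gB_2(W))U(g)^*$, and these algebras exhaust $\tilde\pi_\f(\gB_2(M_+))$ because every $\O\in\K_+$ is a translate of a double cone in $\K(W)$ and $\tilde\pi_\f$ is defined precisely by covariance (Prop.~\ref{extpi}). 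Your first paragraph, and the inclusion $\s_+(\A_+(\hat\O))\subseteq\s(\gA(\RR))''=\s_+(\gA_+(W))''\subseteq\M$ for arbitrary $\hat\O$, are correct and coincide with the heart of the paper's argument.

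There is, however, a gap in your final step. For a \emph{fixed} $\O_*\in\K(W)$ and an arbitrary $\O\in\K_+$, the smallest double cone $\hat\O$ containing $\O_*\cup\O$ need not lie in $M_+$ (take $\O$ far in the past or future of $\O_*$ and close to the boundary: the two chiral hull intervals then overlap). In that case $\tilde\pi_\f$ is simply not defined on $\A_2(\hat\O)$ or $\B_2(\hat\O)$ --- it is a representation of $\B_2|_{M_+}$ only --- so neither ``$\tilde\pi_\f(u_\O)$'' nor the factorization $\tilde\pi_\f(u_\O v)=\tilde\pi_\f(u_\O)\tilde\pi_\f(v)$ is meaningful; knowing that $\s_+(u_\O)\in\M$ tells you nothing, by itself, about $\tilde\pi_\f(u_\O v)$. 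The step is repairable: for each $\O=I\times J\in\K_+$ choose $\O_*=I_*\times J_*\in\K(W)$ with $I_*$ far enough to the left and $J_*$ far enough to the right that the hull $\hat\O$ of $\O\cup\O_*$ lies in $\K_+$ (this is always possible), and use the generating isometry $v\in\B_2(\O_*)$ for that adapted choice; then $u_\O$, $v$ and $u_\O v$ all lie in $\B_2(\hat\O)$ with $\hat\O\in\K_+$, where $\tilde\pi_\f$ is multiplicative, and your conclusion follows. With that repair your proof is a valid, more algebraic alternative to the paper's, trading the uniqueness and positivity of the implementing representation for the $Q$-system structure of the finite-index extension $\A_2\subset\B_2$.
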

\begin{proof}
By Lemma \ref{fmirrep} the restriction of $\pi_\f$ to $\gA_+(W)$ is the direct sum of finitely many irreducible representations, so $\pi_\f(\gA_+(W))'$ is finite dimensional.

Let $U$ be the unitary representation of $\bar\G$ implementing the covariance of $\tilde\pi_\f$ on $\H_\f$. Then $U$ implements the covariance of the representation $\s$ of $\A$ on $\H_\f$, given by Lemma \ref{res}, such that $\s_+ =\tilde\pi_\f$.

Now, by Lemma \ref{irrW}, $\s$ is the direct sum of finitely many irreducible representations of $\A$, so $\s$ is covariant with positive energy, moreover there exists a unique unitary representation $V$ of $\G$ implementing the covariance of $\s$ (because $V(g)\in\s(\gA)''$), therefore $U=V$ has positive energy and belongs to $\s_+(\gA_+(M_+))''$, see \cite[Prop. 2.2.]{GL2}.

If $X$ belongs to $\gB_2(W)'$, then $X\in\s_+(\gA_+(W))'=\s(\gA)'$, hence $X$ commutes with $U$. So $X$ commutes with $U(g)\gB_2(W)U(g)^*$, $g\in\bar\G$, hence with $\gB_2(M_+)$ as desired.
\end{proof}
During the proof of Lemma \ref{commut} we have proved the following.
\begin{corollary}\label{posen}
Let $U$ be the unitary representation of $\bar\G$ implementing the covariance of $\tilde\pi_\f$ on $\H_\f$, and $\s$ the representation of $\A$ on $\H_\f$ given by Lemma \ref{res}. Then $U$ has positive energy and $U(g)\in\s_+(\gA(W))''$.
\end{corollary}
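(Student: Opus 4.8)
This statement was obtained inside the proof of Lemma~\ref{commut}; I would present it self-contained as follows. The plan is to identify the covariance representation $U$ with the canonical inner implementation of the M\"obius covariance of the DHR representation $\s$ furnished by Lemma~\ref{res}, and then to read off both positivity of the energy and the location of the $U(g)$'s from that identification.

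First I would pin down the structure of $\s$. By Lemma~\ref{res} the restriction of $\tilde\pi_\f$ to $\A_+$ equals $\s_+$; by Lemma~\ref{fmirrep} the restriction $\s_+|_{\gA_+(W)}$ decomposes into \emph{finitely many} irreducibles, so $\s_+(\gA_+(W))'$ is finite dimensional; and by Lemma~\ref{irrW}$(c)$ one has $\s_+(\gA_+(W))''=\s(\gA(\RR))''$, whence $\s(\gA(\RR))'$ is finite dimensional and $\s$ is a finite direct sum of irreducible DHR representations of $\A$. Since $\A$ is completely rational, $\s$ is in particular $\bar\G$-covariant with positive energy.

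Now by the inner-implementation result for positive-energy covariant representations (\cite[Prop.~2.2]{GL2}) there is a strongly continuous unitary representation $V$ of $\bar\G$ implementing the covariance of $\s$, with positive energy and with $V(g)\in\s(\gA(\RR))''$ for all $g$. On the other hand $U$ implements the covariance of $\tilde\pi_\f=\s_+$, hence also that of $\s$ (restrict $\s_+$ to the left chiral factor of $\A_+$; this holds near the identity of $\bar\G$, and globally by connectedness). Therefore, for $g$ near the identity, $V(g)^*U(g)$ commutes with every $\s_I(X)$, so $Z(g):=V(g)^*U(g)\in\s(\gA(\RR))'$; and since $V(h)\in\s(\gA(\RR))''$ commutes with $Z(g)\in\s(\gA(\RR))'$, the cocycle relation for $Z$ collapses to $Z(gh)=Z(g)Z(h)$, i.e.\ $g\mapsto Z(g)$ is a local, hence (by connectedness and simple connectedness of $\bar\G=\widetilde{\mathrm{PSL}(2,\RR)}$) global, strongly continuous \emph{finite-dimensional} unitary representation of $\bar\G$. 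As the universal cover of a noncompact simple Lie group admits no nontrivial finite-dimensional unitary representation, $Z\equiv\mathbf 1$, that is $U=V$. Consequently $U$ has positive energy and $U(g)=V(g)\in\s(\gA(\RR))''=\s_+(\gA_+(W))''$, as asserted.

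I expect the genuinely substantive step to be this last identification $U=V$: the earlier parts merely assemble Lemmas~\ref{fmirrep}, \ref{res} and \ref{irrW} with complete rationality, whereas showing that the two implementations of the covariance of the \emph{reducible} representation $\s$ coincide rests both on the finite dimensionality of $\s(\gA(\RR))'$ (coming from Lemma~\ref{fmirrep}) and on the absence of nontrivial finite-dimensional unitary representations of $\bar\G$; one must also be slightly careful that $U$, obtained via Lemma~\ref{Wcov}, implements the \emph{global} $\bar\G$-covariance of $\s$ and not merely a local one, which is exactly where connectedness and simple connectedness of $\bar\G$ enter.
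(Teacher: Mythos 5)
Your proof is correct and follows essentially the same route as the paper's: the corollary is extracted from the proof of Lemma \ref{commut}, where $\s$ is recognized as a finite direct sum of irreducible DHR representations (via Lemmas \ref{fmirrep}, \ref{res} and \ref{irrW}) and $U$ is then identified with the canonical inner, positive-energy implementation $V$ of \cite[Prop.~2.2]{GL2}. The only difference is that you spell out why $U=V$ (the cocycle $Z(g)=V(g)^*U(g)$ lands in the finite-dimensional algebra $\s(\gA(\RR))'$ and gives a finite-dimensional unitary representation of $\bar\G$, hence is trivial), a step the paper compresses into the word ``unique''.
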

\begin{theorem}\label{BCFTs}
$\tilde\pi_\f$ is the direct sum of finitely many irreducible representations of $\B_2 |_{M_+}$.

Each irreducible component of $\tilde\pi_\f$ is covariant and defines an irreducible, local Boundary CFT net on $M_+$.
\end{theorem}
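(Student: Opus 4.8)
The plan is to exploit the fact, established in Lemma \ref{commut}, that $\pi_\f(\gB_2(W))'' = \tilde\pi_\f(C^*(\B_2,M_+))''$, so that the representation theory of the net $\B_2|_{M_+}$ on $M_+$ is governed by the von Neumann algebra generated on $\H_\f$ by $\pi_\f(\gB_2(W))$. First I would observe that by Lemma \ref{fmirrep} the GNS representation $\pi_\f$ decomposes into finitely many irreducible representations of the simple $C^*$-algebra $\gB_2(W)$; equivalently, the commutant $\pi_\f(\gB_2(W))'$ is a finite-dimensional von Neumann algebra. Writing $\pi_\f(\gB_2(W))' = \bigoplus_{k} M_{n_k}(\mathbb C)$ and choosing a complete family of minimal central projections $\{e_k\}$ together with a system of matrix units inside each summand, each minimal projection $p$ in $\pi_\f(\gB_2(W))'$ cuts down $\pi_\f$ to an irreducible representation of $\gB_2(W)$ on $p\H_\f$, and by Lemma \ref{commut} it equally cuts down $\tilde\pi_\f$ to an irreducible representation of $C^*(\B_2,M_+)$ — because $\pi_\f(\gB_2(W))'' = \tilde\pi_\f(C^*(\B_2,M_+))''$ forces the commutants to coincide, hence the set of minimal projections is the same. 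Finitely many mutually orthogonal minimal projections summing to the identity then give the finite direct sum decomposition of $\tilde\pi_\f$ into irreducibles of $\B_2|_{M_+}$.

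The second half is to check that each irreducible component, call it $\tilde\pi_\f^{(k)}$ acting on $\H_k \equiv p_k\H_\f$, actually defines a local, irreducible Boundary CFT net on $M_+$ in the sense of Section \ref{BCFT}. Covariance is the first point: by Corollary \ref{posen} the covariance unitary representation $U$ of $\bar\G$ satisfies $U(g)\in\s_+(\gA(W))''\subset\tilde\pi_\f(C^*(\B_2,M_+))''$, so $U(g)$ commutes with the central projection $p_k$ and therefore restricts to a positive-energy unitary representation $U^{(k)}$ of $\bar\G$ on $\H_k$ implementing the covariance of $\tilde\pi_\f^{(k)}$. Next one defines the candidate net by $\B^{(k)}(\O)\equiv p_k\tilde\pi_\f(\B_2(\O))p_k = \tilde\pi_\f^{(k)}{}_\O(\B_2(\O))$ for $\O\in\K_+$; isotony, locality and the covariance relation $U^{(k)}(g)\B^{(k)}(\O)U^{(k)}(g)^* = \B^{(k)}(g\O)$ descend directly from the corresponding properties of $\tilde\pi_\f$ and $\B_2$. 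For the subnet $\A_+$: by Lemma \ref{res}, $\tilde\pi_\f$ restricted to $\A_+$ is $\s_+$ with $\s$ a DHR representation of $\A$, so $\s$ decomposes into finitely many irreducibles $\s = \bigoplus \s^{(j)}$, and cutting by $p_k$ picks out a subrepresentation $\s^{(k)}$ of $\A$ on $\H_k$ whose induced net $\s^{(k)}_+$ sits inside $\B^{(k)}$; the finite-index inclusion $\pi_2(\A_2(\O))\subset\B_2(\O)$ and the conditional expectations give, after cutting, a finite-index irreducible inclusion $\s^{(k)}_+(\A_+(\O))\subset\B^{(k)}(\O)$ — irreducibility of this inclusion is exactly the statement that $p_k$ is minimal in the relevant commutant, which must be verified locally.

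The vacuum structure is where I expect the genuine work to lie. One must produce, for each $k$, a unit vector $\Omega_k\in\H_k$ which is $U^{(k)}$-invariant, unique up to a phase among invariant vectors, and cyclic and separating for each $\B^{(k)}(\O)$. The natural candidate is $\Omega_k \equiv p_k\xi_\f/\|p_k\xi_\f\|$, provided $p_k\xi_\f\neq 0$; one should first argue, using that $\xi_\f$ is cyclic for $\pi_\f(\gB_2(W))$ and $p_k$ is a nonzero projection in the commutant, that $p_k\xi_\f\neq 0$ for every $k$. Invariance of $\Omega_k$ follows from $U(g)\xi_\f=\xi_\f$ (Lemma \ref{Wcov}) together with $[U(g),p_k]=0$. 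Uniqueness of the invariant vector in $\H_k$ amounts to saying the fixed-point space of $U^{(k)}$ is one-dimensional; since $\s^{(k)}$ is irreducible this should come from the fact that a nonvacuum irreducible DHR sector of $\A$ carries no invariant vector, while the vacuum sector appears with multiplicity one in the decomposition — this multiplicity-one point is the crux, and I would extract it from the bicyclicity of $\xi_\f$ for $\A_+(\O)\subset\B_2(\O)$ established in Lemma \ref{ext2}, which pins down $\xi_\f$ inside the vacuum part. Cyclicity of $\Omega_k$ for $\B^{(k)}(\O)$ follows because $\xi_\f$ is cyclic for $\pi_\f(\gA_+(W))\subset\pi_\f(\gB_2(W))$, hence for $\pi_\f(\B_2(\O))$ for $\O$ large enough, and $p_k$ commutes with both; and separating for $\B^{(k)}(\O)$ then follows from locality of $\B^{(k)}$ exactly as in Proposition \ref{sep}. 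Finally, irreducibility of the net $\B^{(k)}$ (i.e. $\gB^{(k)}(M_+)$ irreducible) and the minimality giving $\s^{(k)}_+(\A_+(\O))'\cap\B^{(k)}(\O)=\mathbb C$ both follow from $p_k$ being minimal in $\pi_\f(\gB_2(W))'$ together with Lemma \ref{commut}. The hard part, to summarize, is the multiplicity-one bookkeeping for the vacuum sector that guarantees a \emph{unique} invariant vector in each summand; everything else is a cutting-by-a-central-projection argument.
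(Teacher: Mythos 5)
Your proposal follows essentially the same route as the paper's (very terse) proof: the finite decomposition is read off from Lemmas \ref{fmirrep} and \ref{commut}, and covariance with positive energy of each summand from Corollary \ref{posen}. One small difference: the paper argues covariance by noting that the \emph{centre} of $\tilde\pi_\f(C^*(\B_2,M_+))''$ is finite-dimensional and globally ${\rm Ad}\,U$-invariant, hence pointwise fixed since $\G$ is connected; you instead observe that $U(g)\in\tilde\pi_\f(C^*(\B_2,M_+))''$ and therefore commutes with \emph{every} projection of the commutant, not just the central ones --- which is the cleaner statement when an irreducible occurs with multiplicity (note only that your $p_k$ are then minimal, not central, projections, so the phrase ``central projection $p_k$'' is a slip, though harmless since $U(g)$ commutes with the whole commutant). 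The remaining bookkeeping you carry out --- $p_k\xi_\f\neq 0$, cyclicity and the separating property via the arguments of Lemma \ref{Wcov} and Proposition \ref{sep}, local irreducibility of $\s^{(k)}_+(\A_+(\O))\subset\B^{(k)}(\O)$ (which in fact holds for any cut-down, by simplicity of the type $III$ factors, independently of minimality of $p_k$; minimality is what gives global irreducibility), and the multiplicity-one question for the vacuum sector governing uniqueness of the invariant vector in each $\H_k$ --- is entirely left implicit in the paper, whose proof never verifies the BCFT axioms for the components. Your identification of the vacuum uniqueness as the one point requiring genuine care is therefore a fair, and useful, sharpening of what the paper actually writes down.
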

\begin{proof} The first statement is immediate from Lemmas \ref{fmirrep} and \ref{commut}.

$\tilde\pi_\f$ is covariant, see Lemma \ref{ext2}, and let $U$ be the covariance unitary representation of $\G$; as the center of $\tilde\pi_\f(C^*(\B))''$ is finite-dimensional and globally Ad$U$-invariant, it is indeed Ad$U$ pointwise fixed because $\G$ is connected, so each irreducible component of $\pi_\f$ is covariant.

We have only to check the positivity of the energy. But this the content of Corollary \ref{posen}.
\end{proof}
\section{The local isomorphism class of a BCFT net}
Given a local, $\bar\G\times\bar\G$-covariant net of factors $\B_2$ on $M$ with chiral subnet isomorphic to $\A\otimes\A$, in Section \ref{adding} we have shown how to construct finitely many boundary CFT nets $\B^{(i)}$ on $M_+$ based on $\A$
\be\label{morita}
\B^{(i)}(\O) \equiv \tilde\pi_{\f,i}(\B_2(\O)),\quad \O\in\K_+,\ i=1,\dots n\ ,
\ee
where $\tilde\pi_\f = \tilde\pi_{\f,1}\oplus\tilde\pi_{\f,2}\oplus\cdots\oplus\tilde\pi_{\f,n}$ and each $\tilde\pi_{\f,i}$ is an irreducible representation of $C^*(\B_2 , M_+)$.

The $\B^{(i)}$ are all isomorphic as QFT nets on $M_+$, so all locally isomorphic, indeed each $\B^{(i)}$ isomorphic to the restriction of $\B_2$ to $M_+$, and the isomorphism interchanges the representation of $\A_+$. Motivated by the algebraic and tensor categorical description of BCFT, we shall say the two BCFT nets on $M_+$ are \emph{Morita equivalent} if they are locally isomorphic as QFT nets on $M_+$ and the isomorphism interchanges the $\A_+$ subnets. The $\B^{(i)}$'s are Morita equivalent since the isomorphism of of $\B^{(i)}$ with $\B |_{M_+}$ carries the $\A_+$-subnet onto $\big(\A\otimes\A\big)|_{M_+}$. 

In this Section we show that the Morita equivalence class is complete, namely every irreducible BCFT net Morita equivalent to $\B |_{M_+}$ is unitarily equivalent to one of the $\B^{(i)}$'s.

The fact that every BCFT net arise in this way, by restriction of a CFT net on $M_+$, has been shown in \cite{LR2}.

So let $\B$ a BCFT net on $M_+$ based on the chiral (completely rational) local conformal net $\A$, and Morita equivalent to $\B_2 |_{M_+}$ (in a similar sense as above). Let $\Omega_\B\in\H_\B$ be the vacuum vector of $\B$ and $\o$ the associated state of $C^*(\B,M)$. 

Notice that the construction of the representation $\tilde\pi_\f$ of $C^*(\B_2,M_+)$ only depends on the QFT net $\B_2 |_{M_+}$ on $M_+$ and its $\A\otimes\A$-subnet. Therefore all we have to show is that the representation $\pi_\o$ of $C^*(\B,M)$ is contained in $\tilde\pi_\f$ of $C^*(\B,M)$, where $\pi_\o$ is the identity representation of $\B$, i.e. the GNS representation associated with the vacuum state $\o$ of $\B$.

Now the state $\f$ of $\gB(W)$ is invariant under the the conditional expectation: $\f = \f_0\cdot \e$ and $\o$ restricts to $\f_0$ on $\gA_+(W)$, therefore
\[
\f(X) = \f_0( \e(X)) = \o(\e(X)) \geq \l \o(X),
\]
for all positive $X\in \gB(W)$, due to the Pimsner-Popa inequality $\e(X)\geq \l X$, with $\l$ the inverse of the Jones index.

Namely $\o|_{\gB(W)}$ is dominated by $\f$. Thus ${\pi_\o}|_{\gB(W)}$ is the subrepresentation of $\pi_\f$ corresponding to a projection $z\in\gB(W)' = \gB(M_+)'$ by Lemma \ref{commut}. Thus $\pi_\o$ is contained in $\tilde\pi_\f$.

So we have:
\begin{theorem}
The construction in Theorem \ref{BCFTs} gives all BCFT nets on $M_+$ that are covariantly locally isomorphic to $\B_2 |_{M_+}$. 

Indeed the BCFT nets $\B^{(i)}$ in \eqref{morita} are Morita equivalent and they exhaust the Morita equivalence class.
\end{theorem}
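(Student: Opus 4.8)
The plan is to reduce the statement to the key facts already assembled: (i) the finite direct sum decomposition $\tilde\pi_\f = \bigoplus_{i=1}^n \tilde\pi_{\f,i}$ into irreducibles of $C^*(\B_2,M_+)$ from Theorem \ref{BCFTs}, each of which yields a BCFT net $\B^{(i)}$ on $M_+$ by \eqref{morita}, and (ii) the containment argument showing that an arbitrary Morita-equivalent BCFT net $\B$ appears as a subrepresentation of $\tilde\pi_\f$. The first assertion, that the $\B^{(i)}$ are Morita equivalent, is essentially already recorded: each $\tilde\pi_{\f,i}$ is obtained from $\tilde\pi_\f$ which restricts on $\A_+$ to $\s_+$ with $\s$ a DHR representation of $\A$ (Lemma \ref{res}), and the net isomorphism $\B_2|_{M_+}\to\B^{(i)}$ carries the $\A\otimes\A$-subnet onto $\A_+$ via $\pi_2 \mapsto \s_{+,i}$; since all of them are quotients of the same $C^*$-algebra $C^*(\B_2,M_+)$ and the quotient maps agree on the subalgebra generated by $\A_+$ up to the fixed DHR representation, composing the isomorphisms $\B^{(i)}\isom\B_2|_{M_+}\isom\B^{(j)}$ gives a local isomorphism interchanging the $\A_+$-subnets.

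For the completeness (exhaustion) part I would argue exactly as in the paragraph preceding the theorem in the excerpt, made precise. Given any irreducible BCFT net $\B$ Morita equivalent to $\B_2|_{M_+}$, with vacuum vector $\Omega_\B$ and vacuum state $\o$ on $C^*(\B,M_+)$, the Morita equivalence means there is a local isomorphism $\Psi$ of nets on $M_+$ from $\B_2|_{M_+}$ onto $\B$ intertwining the $\A\otimes\A$- and $\A_+$-subnets; hence $C^*(\B_2,M_+)$ and $C^*(\B,M_+)$ are canonically identified and the restriction of $\B$ to $W$ carries the same conditional expectation $\e:\gB(W)\to\pi(\A_+(W))$ of index equal to $\lambda^{-1}$. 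Because $\o$ restricts to the vacuum state $\f_0$ on $\gA_+(W)$ (the $\A_+$-subnet vacuum is forced by the BCFT axioms and Lemma \ref{lemma1}), the Pimsner–Popa bound $\e(X)\geq\lambda X$ for positive $X\in\gB(W)$ gives $\f(X)=\f_0(\e(X))=\o(\e(X))\geq\lambda\,\o(X)$, so $\o|_{\gB(W)}$ is dominated by $\f=\f_0\cdot\e$. Consequently $\pi_\o|_{\gB(W)}$ is quasi-contained in $\pi_\f|_{\gB(W)}$, i.e. it is the subrepresentation cut out by a central projection $z\in\pi_\f(\gB(W))'$; by Lemma \ref{commut} one has $\pi_\f(\gB(W))''=\tilde\pi_\f(C^*(\B_2,M_+))''$, so $z$ is automatically central/invariant for all of $\tilde\pi_\f$, and the subrepresentation $z\tilde\pi_\f$ of $C^*(\B_2,M_+)=C^*(\B,M_+)$ restricts on $W$ to $\pi_\o|_{\gB(W)}$. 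Since $\pi_\o$ is irreducible, $z\tilde\pi_\f$ must be one of the irreducible components $\tilde\pi_{\f,i}$, and the corresponding net identity on $W$ extends uniquely to $M_+$ by Lemma \ref{wd} and Proposition \ref{extpi}, giving a unitary equivalence $\B\isom\B^{(i)}$ of BCFT nets (the unitary intertwines the $\A_+$-subnet representations because it does so on $W$, where $\gA_+(W)''=\gA(\RR\setminus\{0\})''$ generates the chiral data by Lemma \ref{irrW}(c)).

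The main obstacle I anticipate is the passage from "$\pi_\o|_{\gB(W)}$ is a subrepresentation of $\pi_\f|_{\gB(W)}$'' to "$\pi_\o$ (as a representation of the full algebra on $M_+$, hence as a BCFT net) is unitarily equivalent to some $\B^{(i)}$ as BCFT nets, not merely as nets on $W$.'' Two points need care here: first, that the central projection $z$ obtained on $\gB(W)'$ really is invariant under the Möbius covariance unitary — this is where Lemma \ref{commut} (together with Corollary \ref{posen}, giving $U(g)\in\s_+(\gA(W))''\subset\pi_\f(\gB(W))''$) is essential, since it forces $\pi_\f(\gB(W))'=\tilde\pi_\f(C^*(\B_2,M_+))'$ so that every subrepresentation on $W$ is automatically a covariant subrepresentation on all of $M_+$; second, that the resulting equivalence respects the $\A$-structure, which follows because on $W$ the von Neumann algebra generated by $\A_+$ is $\gA(\RR\setminus\{0\})''=\gA(\RR)''$ (Lemma \ref{irrW}(c)), so an intertwiner on $W$ already sees the whole chiral net $\A$ and its DHR representation, and by covariance (Proposition \ref{extpi}) it extends to the $M_+$ intertwiner interchanging the $\A_+$-subnets. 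Once these are in place the conclusion that the $\B^{(i)}$ are Morita equivalent and exhaust the class is immediate.
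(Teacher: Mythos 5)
Your proposal is correct and follows essentially the same route as the paper: identify $C^*(\B,M_+)$ with $C^*(\B_2,M_+)$ via the Morita equivalence, use the Pimsner--Popa bound to dominate the vacuum state $\o$ by $\f=\f_0\cdot\e$ on $\gB(W)$, and invoke Lemma \ref{commut} to promote the resulting projection in $\pi_\f(\gB(W))'$ to one in $\tilde\pi_\f(C^*(\B_2,M_+))'$, so that irreducibility of $\pi_\o$ forces $\B\isom\B^{(i)}$ for some $i$. The extra care you take over why $\o$ restricts to $\f_0$ on $\gA_+(W)$ (via Lemma \ref{lemma1}) and over passing from $W$ to $M_+$ (via Proposition \ref{extpi}) only makes explicit details the paper leaves implicit.
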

\noindent
By the same construction, we have also implicitly shown that, starting with a BCFT net $\B$ on $M_+$, all BCFT nets on $M_+$ locally isomorphic to $\B$ are obtained as above.

\section*{Outlook}
It would be interesting to extend and analyze our results in various directions.

First of all, an extension of our construction in the setting of \cite{LW} could lead to new boundary QFT models where one starts with two chiral nets. Related to this point would also be the analysis in the contexts of thermal states and different boundary regions, cf. \cite{LR4,CLTW1,CLTW2}. 

A natural problem is to classify the (finitely many) Boundary CFT nets that are obtained by adding the boundary to a given completely rational, local conformal net on the Minkowski plane. As described in \cite{LR1}, this is related to the computation of the DHR orbits of an extension of a corresponding local net $\A$ on the boundary real line. One simple example were this can be illustrated is the Ising model $\A$: in this case there are only two irreducible chiral extensions: the identity $\A \supset \A$ and the Fermi extensions $\F \supset \A$. Hence, it follows from \cite{LR1} that there are exactly (up to equivalence) two Haag-dual BCFT nets based on $\A$: the dual net $\A_{+}^{\rm dual}$ of $\A_+$ and  a net $\B$ 
whose restriction to the boundary is $\F$ (see \cite{LR1} for a description). The two chiral extensions $\A$ and $\F$ belong to the same DHR orbit (cf. \cite[page 950]{LR1}) so $\A_{+}^{\rm dual}$ and $\B$ are both locally equivalent to the unique maximal two-dimensional extension $\A_2$ of 
$\A \otimes \A$ (the Longo-Rehren extension), see also \cite{KL2}. Now let $\tilde{\B}$ be any other $BCFT$ based on $\A$ which is locally isomorphic to $\A_2$ and let $\tilde{\B}^{\rm dual}$ the corresponding dual BCFT net. Then, either $\tilde{\B}^{\rm dual}= \B$  or 
$\tilde{\B}^{\rm dual}=\A_{+}^{\rm dual}$. In the first case $\tilde{\B}$ is intermediate between  $\A_+$ and $\B$ and the local isomorphism implies that 
$\tilde{\B}=\B$ as they have the same index. Similarly, in the second case, we have  $\tilde{\B}=\A_{+}^{\rm dual}$. Hence $\A_{+}^{\rm dual}$ and $\B$ are the only BCFT nets based on $\A$ and locally equivalent to $\A_2$. 

Another natural problem is to further relate our scheme with the tensor categorial viewpoint, in particular
to the analysis of conformal defects, see \cite{FFRS}.
\bigskip

\noindent
{\bf Aknwoledgments.} We thank  K.-H. Rehren for useful comments.

\end{document}